\newtheorem{Twierdzenie}{Theorem}[section]
\newtheorem{Wniosek}{Corollary}[section]
\newtheorem{Definicja}{Definition}[section]
\newtheorem{Lemat}{Lemma}[section]
\title{Two-sided conformally recurrent self-dual spaces.}
\author{$\textrm{Adam Chudecki}^{*}$}
\begin{document}

\maketitle

$*$ Center of Mathematics and Physics, Lodz University of Technology, 
\newline
$\ \ \ \ \ $ Al. Politechniki 11, 90-924 Łódź, Poland, adam.chudecki@p.lodz.pl
\newline
\newline
\newline
\textbf{Abstract}. 
\newline
Two-sided conformally recurrent 4-dimensional self-dual spaces are considered. It is shown that such spaces are equipped with nonexpanding congruences of null strings. The general structure of weak nonexpanding hyperheavenly spaces is given. Finally, the general metrics of Petrov-Penrose type $[\textrm{D}] \otimes [-]$ spaces are presented.


\setcounter{equation}{0}

\section{Introduction}

The paper is devoted to some interesting aspects of a complex geometry. By "complex geometry" we understand the geometry of the manifolds which are 4-dimensional in a complex sense. We assume, that such manifolds are equipped with a holomorphic metric. Hence, they are generalizations of 4-dimensional real manifolds equipped with a real smooth metric. It is well-known that in dimension 4 there are three different types of real manifolds. These are Lorentzian manifolds (in this case the metric has signature $(+++-)$), neutral manifolds (signature of the metric is $(++--)$) and Riemannian (also called proper-Riemannian or Euclidean) manifolds (signature $(++++)$). 

The spaces which we analyze in this paper have additional property: they are \textsl{two-sided conformally recurrent}. The idea of the spaces which are \textsl{conformally recurrent}, i.e., for which $\nabla_{m} C_{abcd} = r_{m} C_{abcd}$\footnote{For \textsl{conformally symmetric spaces} $r_{m}=0$.}, $C_{abcd} \ne 0$, has a long history in general theory of relativity and in differential geometry. One of the most transparent papers devoted to such spaces is that by 
McLenaghan and Leroy \cite{Mclenaghan_Leroy}. In this paper the authors have found all Lorentzian metrics which are two-sided conformally recurrent. It appeared that if such a space is not conformally flat its metric is of the Petrov-Penrose type $[\textrm{D}]$ or $[\textrm{N}]$.

What more can be done in the subject if the metrics of the conformally recurrent spaces are explicitly known? The answer is: the results can be generalized to the complex case and to the real cases of the different signatures. Although real Riemannian spaces and real neutral spaces are not the realistic models of space-time, they play a great role in theoretical physics and in geometry. Neutral spaces appear in Walker and Osserman geometries \cite{Walker, Chudecki_Przanowski_Walkery}, projective structures \cite{Dunajski_Mettler}, integrable systems and ASD structures \cite{Dunajski_Tod}, rolling bodies, para-Hermite and para-K\"{a}hler spaces \cite{Nurowski_1, Nurowski_2}. The investigations of real Riemannian spaces led to the idea of \textsl{gravitational instantons} \cite{Przanowski_inst_1, Przanowski_inst_2,Dunajski_Tod_2,Tod}. 4-dimensional real manifolds can be obtained from a complex solution by the procedure of \textsl{real slice} of complex metric \cite{Rozga}. This remarkable fact justifies the studies of complex spaces in dimension 4.

Generalization of the Lorentzian conformally recurrent spaces has been presented in the distinguished paper by Plebański and Przanowski \cite{Plebanski_Przanowski_rec}. They extended the definition of the conformal recurrent spaces to the \textsl{two-sided conformally recurrent spaces} and they proved that if a space is two-sided conformally recurrent then both self-dual (SD) and anti-self-dual (ASD) Weyl spinors must be of the Petrov-Penrose types $[\textrm{D}]$, $[\textrm{N}]$ or $[-]$. Hence, the only complex spaces which can be two-sided conformally recurrent are spaces of the types $[\textrm{D}] \otimes [\textrm{D}]$, $[\textrm{N}] \otimes [\textrm{N}]$, $[\textrm{D}] \otimes [-]$ and $[\textrm{N}] \otimes [-]$. The generalization of the results of \cite{Mclenaghan_Leroy} led to the complex spaces of the types $[\textrm{D}] \otimes [\textrm{D}]$ and $[\textrm{N}] \otimes [\textrm{N}]$. Plebański and Przanowski found neutral and Riemannian slices of these types of spaces. 

An important section of \cite{Plebanski_Przanowski_rec} is devoted to the self-dual spaces. Self-dual spaces are the spaces for which ASD Weyl spinor vanishes. SD metrics of the type $[\textrm{N}] \otimes [-]$ have been explicitly presented in \cite{Plebanski_Przanowski_rec}, but Plebański and Przanowski wrote: "Up to now we have not succeeded in integrating the type $[\textrm{D}] \otimes [-]$. It seems to be a rather hard problem". The main aim of our paper is to fill this gap.

In our paper the formalism of \textsl{weak hyperheavenly ($\mathcal{HH}$) spaces} is used \cite{Chudecki_Przanowski_Walkery, Plebanski_Rozga}. The important property of the weak $\mathcal{HH}$-spaces is the existence of 2-dimensional, totally null completely integrable distribution. The family of the integral manifolds of such a distribution is called \textsl{congruence (foliation) of null strings} (for definition, see section \ref{section_congruences_of_null_strings}). Two-sided conformally recurrent type $[\textrm{D}] \otimes [-]$  spaces are equipped with two SD congruences and both of them are parallely propagated. This fact is crucial for our considerations. 

The main results of our paper are the metrics (\ref{postac_dKS}), (\ref{Non_Einstein_second_solution}) and (\ref{rozwiazanie_Dxnic_einsteinowskie_bez_funkcji_f}). These are general metrics of the SD spaces of the type $[\textrm{D}]^{nn} \otimes [-]^{e}$. They admit real slices. Real neutral slices equipped with two nonexpanding congruences of (real) SD null strings can be obtained immediately by considering all coordinates and constants as real ones. Thus, in neutral case the metrics (\ref{postac_dKS}), (\ref{Non_Einstein_second_solution}) and (\ref{rozwiazanie_Dxnic_einsteinowskie_bez_funkcji_f}) are general metrics of so called para-K\"{a}hler type $[\textrm{D}]^{nn} \otimes [-]^{e}$ spaces. Real Riemannian slices involve a more subtle approach. We analyze Einstein metric (\ref{rozwiazanie_Dxnic_einsteinowskie_bez_funkcji_f}) and we transform it to the form (\ref{typ_DnnxDnn_einstein_IIformalizm_optymalna}) which is more plausible for obtaining real Riemannian slices. It appears that Riemannian slice of the metric (\ref{typ_DnnxDnn_einstein_IIformalizm_optymalna}) is Fubini-Study metric. Riemannian slices of the non-Einstein metrics (\ref{postac_dKS}) and (\ref{Non_Einstein_second_solution}) will be presented elsewhere.

The paper is organized, as follows.

In section 2 short introduction to the formalism used in the paper is given. The concept of the congruences of null strings and Petrov-Penrose classification of the Weyl spinors is introduced. It is also proven that two-sided conformally recurrent spaces are equipped with nonexpanding congruence of null strings. Section 3 is devoted to the nonexpanding weak hyperheavenly spaces. The definition and the form of the metric of such spaces are introduced. It is explained how to pass from such spaces to the algebraically degenerated SD spaces.

In section 4 a special subtype of the SD spaces of the type [D], namely, type $[\textrm{D}^{nn}] \otimes [-]^{e}$ is considered. Spaces of such a type are two-sided conformally recurrent. The final equations (\ref{koncowy_uklad}) are found and integrated. The solutions led to the metrics (\ref{postac_dKS}), (\ref{Non_Einstein_second_solution}) and (\ref{rozwiazanie_Dxnic_einsteinowskie_bez_funkcji_f}). Finally, the transformation of the Einstein solution (\ref{rozwiazanie_Dxnic_einsteinowskie_bez_funkcji_f}) to the double null coordinates system is presented. 

Concluding remarks close the paper.

\setcounter{equation}{0}
\section{Preliminaries}

\subsection{Petrov-Penrose classification}

Let $(\mathcal{M}, ds^{2})$ be a 4-dimensional complex analytic differential manifold equipped with a holomorphic metric. The metric $ds^{2}$ can be written in the form 
\begin{equation}
ds^{2} = 2e^{1}e^{2} + 2e^{3}e^{4} 
\end{equation}
where 1-forms $(e^{1},e^{2},e^{3},e^{4})$ are the members of a \textsl{complex null tetrad} and they form a basis of 1-forms. The dual basis is denoted by $(\partial_{1}, \partial_{2}, \partial_{3}, \partial_{4})$ and it is also called \textsl{the null tetrad}. $g^{A\dot{B}}$ and $\partial_{A\dot{B}}$ are the spinorial images of the null tetrads
\begin{equation}
\label{definicccja_gAB}
(g^{A\dot{B}}) := \sqrt{2}
\left[\begin{array}{cc}
e^4 & e^2 \\
e^1 & -e^3
\end{array}\right] 
 , \ 
(\partial_{A\dot{B}}) := -\sqrt{2}
\left[\begin{array}{cc}
\partial_{4} & \partial_{2} \\
\partial_{1} & -\partial_{3}
\end{array}\right] , \  A=1,2,  \ \dot{B}=\dot{1},\dot{2}
\end{equation}
Spinorial indices are manipulated according to the following rules 
\begin{equation}
\label{spinorial_indices_lowering_rule}
m_{A} = \ \in_{A B} m^{B}
\ , \ \ \ 
m^{A} = m_{B} \in^{BA}
\ , \ \ \
m_{\dot{A}} = \ \in_{\dot{A} \dot{B}} m^{\dot{B}}
\ , \ \ \ 
m^{\dot{A}} = m_{\dot{B}} \in^{\dot{B} \dot{A}}
\end{equation}
where $\in_{AB}$ and $\in_{\dot{A}\dot{B}}$ are the spinor Levi-Civita symbols
\begin{eqnarray}
&& (\in_{AB})  := \left[ \begin{array}{cc}
                            0 & 1   \\
                           -1 & 0  
                            \end{array} \right] =:  (\in^{AB} )
\ \ \ , \ \ \ 
 (\in_{\dot{A}\dot{B}})  := \left[ \begin{array}{cc}
                            0 & 1   \\
                           -1 & 0  
                            \end{array} \right] =:  (\in^{\dot{A}\dot{B}} ) 
\\ \nonumber
&& \in_{AC} \in^{AB} = \delta^{B}_{C} \ \ \ , \ \ \ \in_{\dot{A}\dot{C}} \in^{\dot{A}\dot{B}} = \delta^{\dot{B}}_{\dot{C}} \ \ \ , \ \ \ 
(\delta^{A}_{C})= (\delta^{\dot{B}}_{\dot{C}})= \left[ \begin{array}{cc}
                            1 & 0   \\
                            0 & 1  
                            \end{array} \right]
\end{eqnarray}
Spinorial formalism used in this paper is Infeld - Van der Waerden - Plebański notation. For more details see \cite{Plebanski_Spinors,Pleban_formalism_2}. Also, a brief summary of this formalism has been presented in \cite{Chudecki_struny,Plebanski_Przanowski_rec}. 

Spinorial image of the SD part of the Weyl tensor is called \textsl{the SD Weyl spinor} and it is 4-index, undotted spinor totally symmetric in all indices $C_{ABCD} = C_{(ABCD)}$. It is well know that it can be presented as a product of some 1-index spinors
\begin{equation}
\label{decomposition_of_SD_Weyl}
C_{ABCD} = a_{(A} b_{B} c_{C} d_{D)}
\end{equation}
$a_{A}$, $b_{A}$, $c_{A}$ and $d_{A}$ are undotted complex spinors which are called \textsl{Penrose spinors}. If all Penrose spinors are mutually linearly independent then the SD conformal curvature is \textsl{algebraically general}. If at least two Penrose spinors are proportional to each other, then the SD conformal curvature is \textsl{algebraically special}. All possible degenerations between Penrose spinors give the Petrov-Penrose classification of the SD conformal curvature:
\begin{eqnarray}
\textrm{type [I]} : \ && \ C_{ABCD} = a_{(A} b_{B} c_{C} d_{D)}  
\\ \nonumber
\textrm{type [II]}: \ && \ C_{ABCD} = a_{(A} a_{B} b_{C} c_{D)}   
\\ \nonumber
\textrm{type [D]}: \ && \  C_{ABCD} = a_{(A} a_{B} b_{C} b_{D)}  
\\ \nonumber
\textrm{type [III]}: \ && \  C_{ABCD} = a_{(A} a_{B} a_{C} b_{D)}  
\\ \nonumber
\textrm{type [N]}: \ && \  C_{ABCD} = a_{A} a_{B} a_{C} a_{D} 
\\ \nonumber
\textrm{type }[-]: \ && \  C_{ABCD} = 0 
\end{eqnarray} 
The similar classification holds true for the ASD Weyl spinor $C_{\dot{A}\dot{B}\dot{C}\dot{D}}$. 

In a complex case SD and ASD Weyl spinors are unrelated and they can be of an arbitrary Petrov-Penrose type. To find the type of the conformal curvature of a complex space one has to determine the Petrov-Penrose type of SD and ASD parts separately. Commonly used symbol is $[\textrm{SD type}] \otimes [\textrm{ASD type}]$. For example,  $[\textrm{D}] \otimes [\textrm{N}]$ means that SD Weyl spinor is of the type [D] while ASD Weyl spinor is of the type [N].

\subsection{Congruences of null strings}
\label{section_congruences_of_null_strings}

In this section we equip $(\mathcal{M}, ds^{2})$ with additional structure: 2-dimensional, integrable, totally null distribution. Family of integral manifolds of such distribution is called \textsl{congruence of null strings} and it plays a fundamental role in our further investigations. For more detailed analysis of the congruences of null strings see \cite{Plebanski_Rozga,Chudecki_struny}.

Let $\mathcal{D}_{m^{A}} = \{ m_{A} a_{\dot{A}}, m_{A} b_{\dot{A}} \}$, $a_{\dot{A}} b^{\dot{B}} \ne 0$ be a 2-dimensional SD holomorphic distribution defined by the Pfaff system 
\begin{equation}
m_{A} g^{A \dot{B}} = 0
\end{equation}
The distribution $\mathcal{D}_{m^{A}}$ is integrable in the Frobenius sense if and only if the spinor field $m_{A}$ satisfies the equations
\begin{equation}
\label{rownania_strun_SD}
m^{A} m^{B} \nabla_{A \dot{C}} m_{B} = 0
\end{equation}
Eqs. (\ref{rownania_strun_SD}) are called \textsl{SD null string equations}. If Eqs. (\ref{rownania_strun_SD}) hold true one says that the spinor $m_{A}$ generates the \textsl{congruence of SD null strings}. The integral manifolds of the distribution $\mathcal{D}_{m^{A}}$ are totally null and geodesic, 2-dimensional SD holomorphic surfaces (\textsl{SD null strings}). The family of such surfaces constitute \textsl{the congruence of SD null strings}. More precisely 
\begin{Definicja}
\label{definicja_foliacji_strun}
A congruence (foliation) of null strings in a complex 4-dimensional manifold $\mathcal{M}$ is a family of totally null and totally geodesic 2-dimensional holomorphic surfaces such that for every point $p\in \mathcal{M}$ there exists only one surface of this family such that $p$ belongs to this surface.
\end{Definicja}
Eqs. (\ref{rownania_strun_SD}) are equivalent to the equations
\begin{equation}
\label{rozwiniete_rownania_strun}
\nabla_{A \dot{C}} m_{B} = Z_{A \dot{C}} m_{B} + \in_{AB} M_{\dot{C}}
\end{equation}
where $Z_{A \dot{C}}$ is \textsl{the Sommers vector} and $M_{\dot{C}}$ is \textsl{the expansion of the congruence}\footnote{The concept of the expansion of the congruence of null strings should not be mistaken with the concept of the expansion of the congruence of null geodesic lines. The expansion of the congruence of null geodesics is well-known in general theory of relativity and together with other \textsl{optical scalars} like \textsl{twist} and \textsl{shear} is used for the classification of the exact solutions of Einstein field equations, see, e.g., \cite{Exacty}.}. The expansion is the most important property of the congruence of null strings. If $M_{\dot{A}} \ne 0$ the congruence is \textsl{expanding}; if $M_{\dot{A}} = 0$ then the congruence is \textsl{nonexpanding}. If the congruence is nonexpanding it means that the distribution $\mathcal{D}_{m^{A}}$ is parallely propagated, i.e., $\nabla_{X}V \in \mathcal{D}_{m^{A}}$ for every vector field $V \in \mathcal{D}_{m^{A}}$ and for arbitrary vector field $X$. 

Note, that if $M_{\dot{C}}=0$ then from (\ref{rozwiniete_rownania_strun}) it follows that covariant derivative of the spinor field $m_{B}$ is proportional to this field, $\nabla_{a} m_{B} = Z_{a} m_{B}$. Such a spinor field is called \textsl{recurrent}. Hence
\begin{Wniosek}
Nonexpanding congruences of null strings are generated by the recurrent spinor fields. \hfill $\blacksquare$ 
\end{Wniosek}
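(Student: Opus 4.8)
The plan is very short, since the statement is almost a direct reading of equation (\ref{rozwiniete_rownania_strun}); I would prove it as a two-way characterization. For the implication ``nonexpanding $\Rightarrow$ recurrent'' I would simply set $M_{\dot{C}} = 0$ in (\ref{rozwiniete_rownania_strun}) -- which is exactly what nonexpansion means -- and read off $\nabla_{A\dot{C}} m_{B} = Z_{A\dot{C}} m_{B}$, i.e. $\nabla_{a} m_{B} = Z_{a} m_{B}$. By definition this says that $m_{B}$ is recurrent, with recurrence covector the Sommers vector $Z_{a}$. This is precisely the observation recorded in the Remark preceding the statement.

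For the converse, ``recurrent $\Rightarrow$ generates a nonexpanding congruence'', I would start from $\nabla_{A\dot{C}} m_{B} = Z_{A\dot{C}} m_{B}$ and first note that $m_{B}$ automatically satisfies the null string equations (\ref{rownania_strun_SD}): contracting gives $m^{A} m^{B} \nabla_{A\dot{C}} m_{B} = (m^{A} Z_{A\dot{C}})\, m^{B} m_{B} = 0$, since $m^{B} m_{B} = \in_{BC} m^{B} m^{C} = 0$ by antisymmetry of $\in_{BC}$. Hence $m_{B}$ does generate a congruence of SD null strings, so the representation (\ref{rozwiniete_rownania_strun}) holds for it with some Sommers vector $\widetilde{Z}_{A\dot{C}}$ and some expansion $M_{\dot{C}}$. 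Comparing the two expressions for $\nabla_{A\dot{C}} m_{B}$ gives $(Z_{A\dot{C}} - \widetilde{Z}_{A\dot{C}})\, m_{B} = \in_{AB} M_{\dot{C}}$, and contracting with $m^{B}$ (using $m^{B} m_{B} = 0$ and $\in_{AB} m^{B} = m_{A}$) forces $m_{A} M_{\dot{C}} = 0$, hence $M_{\dot{C}} = 0$.

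There is no genuine obstacle here; the corollary is a one-line consequence of (\ref{rozwiniete_rownania_strun}). The only point worth a sentence of care is that, in the converse direction, one must check that a recurrent spinor really does generate a congruence of null strings before it makes sense to speak of its expansion -- and this is immediate from (\ref{rownania_strun_SD}).
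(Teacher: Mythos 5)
Your proposal is correct and follows essentially the same route as the paper, which obtains the corollary directly from the remark that setting $M_{\dot{C}}=0$ in Eq. (\ref{rozwiniete_rownania_strun}) yields $\nabla_{a} m_{B} = Z_{a} m_{B}$, i.e.\ recurrence of the generating spinor. Your additional converse check (a recurrent spinor satisfies (\ref{rownania_strun_SD}) and forces $M_{\dot{C}}=0$) is correct but goes beyond what the stated corollary requires.
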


Two facts are important for further analysis (see \cite{Chudecki_struny} for proofs)
\begin{Twierdzenie} 
\label{theoorem_1}
If a spinor $m_{A}$ generates a congruence of SD null strings, then it is a Penrose spinor. \hfill $\blacksquare$ 
\end{Twierdzenie}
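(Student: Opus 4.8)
The plan is to prove the equivalent algebraic statement that $m_{A}$ is a principal spinor of the SD Weyl spinor, i.e.\ $C_{ABCD}\,m^{A}m^{B}m^{C}m^{D}=0$. Indeed, by the decomposition (\ref{decomposition_of_SD_Weyl}) this quantity equals $(a_{A}m^{A})(b_{B}m^{B})(c_{C}m^{C})(d_{D}m^{D})$, so its vanishing forces $m_{A}$ to be proportional to one of $a_{A},b_{A},c_{A},d_{A}$, which is exactly the assertion that $m_{A}$ is a Penrose spinor. The tool is the spinorial Ricci identity: for an arbitrary undotted spinor field $m_{C}$ the symmetric‑in‑$(AB)$, dotted‑contracted second covariant derivative has the form
\begin{equation}
\nabla_{(A}{}^{\dot{X}}\nabla_{B)\dot{X}}\,m_{C}
= -\,C_{ABCD}\,m^{D} + (\textrm{curvature scalar})\,\big(\in_{CA}m_{B}+\in_{CB}m_{A}\big),
\end{equation}
where the precise coefficient of the last term is irrelevant for us. Contracting with $m^{A}m^{B}m^{C}$ and using $\in_{CA}m^{C}m^{A}=\in_{CB}m^{C}m^{B}=0$, the trace term drops out, so $m^{A}m^{B}m^{C}\,\nabla_{(A}{}^{\dot{X}}\nabla_{B)\dot{X}}m_{C} = -\,C_{ABCD}\,m^{A}m^{B}m^{C}m^{D}$.

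It remains to show that the left-hand side vanishes, and this is where the hypothesis is used. Since $m_{A}$ generates a congruence of SD null strings, (\ref{rownania_strun_SD}) holds, equivalently (\ref{rozwiniete_rownania_strun}): $\nabla_{A\dot{C}}m_{B}=Z_{A\dot{C}}m_{B}+\in_{AB}M_{\dot{C}}$. I would substitute this once to evaluate $\nabla_{B\dot{X}}m_{C}$, then apply $\nabla_{A}{}^{\dot{X}}$ and use (\ref{rozwiniete_rownania_strun}) a second time to remove the one remaining first derivative $\nabla_{A}{}^{\dot{X}}m_{C}$. Inspecting the resulting terms: each is either manifestly proportional to $m_{C}$ (killed on contraction with $m^{C}$), or carries a factor $\in_{BC}$ coming from the $\in_{BC}M_{\dot{X}}$ piece (killed by $m^{B}m^{C}$), or a factor $\in_{AC}$ arising when $Z_{B\dot{X}}\nabla_{A}{}^{\dot{X}}m_{C}$ is re-expanded via (\ref{rozwiniete_rownania_strun}) (killed by $m^{A}m^{C}$). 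Symmetrization in $(AB)$ keeps every term in one of these three classes. Hence $m^{A}m^{B}m^{C}\,\nabla_{(A}{}^{\dot{X}}\nabla_{B)\dot{X}}m_{C}=0$, so $C_{ABCD}\,m^{A}m^{B}m^{C}m^{D}=0$ and the theorem follows.

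The computation is short; the only place that demands attention is fixing the conventions — signs, the factor multiplying the curvature scalar, and the weights in the index symmetrization — in the spinorial Ricci identity, so that the bookkeeping is consistent with the Infeld--Van der Waerden--Pleba\'nski formalism of \cite{Plebanski_Spinors}. Structurally there is no real obstacle, because the hypothesis has been turned into the recurrence-type relation (\ref{rozwiniete_rownania_strun}), which makes every non-$\in$ contribution collinear with $m_{C}$. (Alternatively one could argue geometrically: a nonexpanding congruence makes $\mathcal{D}_{m^{A}}$ parallely propagated, and evaluating the curvature on vectors tangent to a null string would give the same conclusion; but the algebraic route above is the most direct, and it also covers the expanding case, where $M_{\dot{C}}\neq 0$.)
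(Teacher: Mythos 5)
The paper does not actually prove this theorem in the text --- it states it and refers the reader to \cite{Chudecki_struny} --- so there is no in-paper proof to compare against; your argument must be judged on its own, and it is correct. It is the standard route (and, as far as I can tell, the one taken in the cited reference): reduce the claim to $C_{ABCD}m^{A}m^{B}m^{C}m^{D}=0$ via (\ref{decomposition_of_SD_Weyl}), note that the operator $\nabla_{(A}{}^{\dot{X}}\nabla_{B)\dot{X}}$ acting on an undotted spinor produces only the SD Weyl spinor plus a scalar-curvature term proportional to $\in$'s (no traceless Ricci), and then observe that after inserting (\ref{rozwiniete_rownania_strun}) twice every term on the left-hand side carries one of the vanishing factors $m^{C}m_{C}$, $\in_{BC}m^{B}m^{C}$ or $\in_{AC}m^{A}m^{C}$; I have checked this bookkeeping explicitly (the surviving candidate term $m^{B}Z_{B\dot{X}}\,m^{A}m^{C}\nabla_{A}{}^{\dot{X}}m_{C}$ reduces to $m^{B}Z_{B\dot{X}}M^{\dot{X}}\,m^{A}m_{A}=0$), and it goes through in any consistent convention, expanding or not. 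The only implicit hypothesis worth flagging is $C_{ABCD}\neq 0$, without which the notion of Penrose spinor in (\ref{decomposition_of_SD_Weyl}) is vacuous; this is harmless here.
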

\begin{Twierdzenie} 
If a spinor $m_{A}$ generates a nonexpanding congruence of SD null strings, then it is a multiple Penrose spinor. \hfill $\blacksquare$ 
\end{Twierdzenie}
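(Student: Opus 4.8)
The plan is to convert the nonexpanding condition into a second-order differential constraint on $m_{A}$ by means of the spinorial Ricci identities, and then to read off the multiplicity of $m_{A}$ as a principal spinor of the self-dual Weyl spinor. First, since the congruence is nonexpanding we have $M_{\dot{C}}=0$ in (\ref{rozwiniete_rownania_strun}), so $m_{A}$ is recurrent, $\nabla_{A\dot{B}}\,m_{C}=Z_{A\dot{B}}\,m_{C}$ for some $Z_{A\dot{B}}$. Differentiating once more and using recurrence again, $\nabla_{A\dot{X}}\nabla_{B}{}^{\dot{X}}m_{C}$ is manifestly of the form $W_{AB}\,m_{C}$, each term being $m_{C}$ times a coefficient built from $Z$ and $\nabla Z$. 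On the other hand, the spinorial Ricci identity applied to the undotted spinor $m_{C}$ expresses the part of $\nabla_{A\dot{X}}\nabla_{B}{}^{\dot{X}}m_{C}$ symmetric in $A,B$ purely through the self-dual Weyl spinor and the scalar curvature -- no traceless Ricci term can appear, all free indices being undotted -- so that
\begin{equation*}
W_{(AB)}\,m_{C}\;=\;C_{ABCD}\,m^{D}+\sigma\,\big(m_{A}\!\in_{BC}+m_{B}\!\in_{AC}\big),
\end{equation*}
where $\sigma$ is the scalar curvature up to a numerical factor fixed by the conventions of \cite{Plebanski_Spinors}.

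Next I would transvect this identity with $m^{C}$. The left-hand side vanishes since $m_{C}m^{C}=0$, and with $\in_{BC}m^{C}=m_{B}$ the curvature term collapses to $2\sigma\,m_{A}m_{B}$, whence
\begin{equation*}
C_{ABCD}\,m^{C}m^{D}\;=\;-\,2\sigma\,m_{A}m_{B}\;\propto\;m_{A}m_{B}.
\end{equation*}
Transvecting once more with $m^{B}$ gives $C_{ABCD}\,m^{B}m^{C}m^{D}=0$, which by the decomposition (\ref{decomposition_of_SD_Weyl}) (equivalently, by Theorem \ref{theoorem_1}) means exactly that $m_{A}$ occurs in $C_{ABCD}$ with multiplicity at least two; that is, $m_{A}$ is a multiple Penrose spinor. (If $\sigma=0$ one even obtains $C_{ABCD}\,m^{C}m^{D}=0$, so the multiplicity is then at least three.)

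The delicate step is the middle one: I must invoke the precise form of the spinorial Ricci identity for $m_{C}$ in the Infeld--Van der Waerden--Pleba\'nski formalism adopted here, so as to be sure that the symmetric-in-$AB$ part of $\nabla_{A\dot{X}}\nabla_{B}{}^{\dot{X}}m_{C}$ really reduces to $C_{ABCD}\,m^{D}$ plus a scalar-curvature term of the displayed shape, with no traceless Ricci contribution and with the correct numerical coefficients. Everything else is the two short transvections above. An equivalent route, perhaps more in the spirit of the hyperheavenly formalism used later in the paper, is to pick a null tetrad with $m_{A}$ aligned along a tetrad leg and extract the same algebraic conditions on $C_{ABCD}$ directly from the structure equations with $M_{\dot{C}}=0$; this is essentially what is done in \cite{Chudecki_struny}.
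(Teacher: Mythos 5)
Your argument is correct, and it is worth noting that the paper itself does not supply a proof of this theorem at all: it is stated with a reference to \cite{Chudecki_struny}, so there is no in-text proof to compare against. Your route is the natural self-contained one. The chain recurrence $\Rightarrow$ $\nabla_{A\dot{X}}\nabla_{B}{}^{\dot{X}}m_{C}=W_{AB}\,m_{C}$ $\Rightarrow$ Ricci identity $\Rightarrow$ two transvections with $m$ is sound, and the structural fact you flag as delicate is indeed true in this formalism: the curvature operator symmetric in the undotted pair acting on an undotted spinor involves only $C_{ABCD}$ and the curvature scalar, never the traceless Ricci spinor $C_{AB\dot{C}\dot{D}}$, whose indices are mixed. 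Moreover your conclusion is insensitive to the precise value of the scalar coefficient $\sigma$: with the paper's lowering convention $m_{B}=\ \in_{BC}m^{C}$ the scalar term contracts to a multiple of $m_{A}m_{B}$, which is then annihilated by the second transvection, so $C_{ABCD}m^{B}m^{C}m^{D}=0$ follows for any convention, and this is exactly the statement that $m_{A}$ is a repeated Penrose spinor in the decomposition (\ref{decomposition_of_SD_Weyl}). Your parenthetical remark that $\sigma=0$ forces multiplicity at least three is also consistent with the paper's later assertion that nonexpanding congruences with $R=0$ occur only for the types $[\textrm{III}]$, $[\textrm{N}]$ or $[-]$, while $[\textrm{D}]^{nn}$ requires $R\ne 0$. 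The only caveat is the one you already acknowledge: if you want explicit numerical coefficients you must fix them from the Infeld--Van der Waerden--Pleba\'nski conventions (or from the tetrad structure equations with $M_{\dot{C}}=0$, as in \cite{Chudecki_struny}), but none of the logic depends on them.
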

From Theorem \ref{theoorem_1} it follows, that maximal number of distinct congruences of SD (ASD) null strings is  4. Only SD (ASD) type $[\textrm{I}]$ can be equipped with 4 distinct congruences of SD (ASD) null strings. For our purposes it is especially desirable to analyze Petrov-Penrose type $[\textrm{D}]$. There are 6 possibilities (see Scheme \ref{Degeneration_scheme_of_complex_case})\footnote{In Einstein spaces only subtypes $[\textrm{D}]^{nn}$ and $[\textrm{D}]^{ee}$ are admitted \cite{Plebanski_surf,Przanowski_Formanski_Chudecki}.}. Superscript \textsl{e} means that the corresponding congruence is expanding while superscript \textsl{n} means that it is nonexpanding. The number of superscripts corresponds to the number of distinct congruences.

[\textbf{Remark}. In real spaces equipped with a neutral signature metric there are two different types [D] of SD Weyl spinor. The first one is usually denoted by $[\textrm{D}_{r}]$ (subscript $r$ means \textsl{real}). In this case SD Weyl spinor can be decomposed according to the formula $C_{ABCD} = a_{(A} a_{B} b_{C} b_{D)}$ where $a_{A}$ and $b_{A}$ are real 1-index undotted spinors. The second type is denoted by $[\textrm{D}_{c}]$ (subscript $c$ means \textsl{complex}) and its decomposition reads $C_{ABCD} = a_{(A} a_{B} \bar{a}_{C} \bar{a}_{D)}$ where $a_{A}$ is a complex 1-index undotted spinor and bar denotes complex conjugation. Hence, Scheme \ref{Degeneration_scheme_of_complex_case} holds true also for the type $[\textrm{D}_{r}]$. Type $[\textrm{D}_{c}]$ does not admit any real congruences of null strings.]

\begin{Scheme}[!ht]
\begin{displaymath}
\xymatrixcolsep{0.3cm}
\xymatrixrowsep{0.8cm}
\xymatrix{
\textrm{0 congruences:} &  [\textrm{D}]  \ar[d]  \ar[dr]  &  &   \\ 
\textrm{1 congruence:} &  [\textrm{D}]^{e} \ar[d]  \ar[dr] & [\textrm{D}]^{n} \ar[d]  \ar[dr]   \\
\textrm{2 congruences:} &  [\textrm{D}]^{ee}  & [\textrm{D}]^{en}  & [\textrm{D}]^{nn}  \\
}
\end{displaymath} 
\caption{Subtypes of Petrov-Penrose type $[\textrm{D}]$.}
\label{Degeneration_scheme_of_complex_case}
\end{Scheme}

\begin{Twierdzenie}
\label{Uwaga_1}
If the SD (ASD) Weyl spinor vanishes, then the space is equipped with infinitely many congruences of SD (ASD) null strings. If the curvature scalar is nonzero $R \ne 0$ then all congruences of SD (ASD) null strings are expanding. If the curvature scalar vanishes $R = 0$ then there are both expanding and nonexpanding congruences of SD (ASD) null strings. 
\end{Twierdzenie}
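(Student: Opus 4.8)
The plan is to extract the integrability conditions of the null-string equation from the spinorial Ricci identities and to read off the three assertions from them. Concretely, I will start from the differential form (\ref{rozwiniete_rownania_strun}) of the condition, $\nabla_{A\dot{C}}m_{B}=Z_{A\dot{C}}m_{B}+\in_{AB}M_{\dot{C}}$, differentiate once more and antisymmetrise to form the commutator $[\nabla_{A\dot{C}},\nabla_{B\dot{D}}]m_{E}$. This commutator splits into an undotted box, symmetric in the undotted pair, acting on $m_{E}$ through $C_{ABE}{}^{D}m_{D}$ plus a trace term proportional to $R$, and a dotted box acting through the traceless Ricci spinor $C_{AB\dot{C}\dot{D}}$. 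Evaluating the same second derivative directly from (\ref{rozwiniete_rownania_strun}) re-expresses it through $Z_{A\dot{C}}$, $M_{\dot{C}}$ and their first derivatives; matching the two presentations supplies everything needed.

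For the first assertion I would fully contract the undotted-box identity with $m^{A}m^{B}m^{E}$. The $R$-term and all terms carrying $Z_{A\dot{C}}$ or $M_{\dot{C}}$ drop out --- they contain $\in_{AB}$ or $\in_{EB}$ contracted into symmetric $m^{A}m^{B}$, or produce $m^{E}m_{E}=0$ --- leaving exactly $C_{ABED}m^{A}m^{B}m^{E}m^{D}=0$, which is the content of Theorem~\ref{theoorem_1}. If the SD Weyl spinor vanishes this holds for every undotted direction, so the zeroth-order obstruction to (\ref{rownania_strun_SD}) disappears uniformly in $m_{A}$; the first-order system (\ref{rownania_strun_SD}) for the projective field $[m_{A}]$ is then integrable in every direction, the space carries a three-complex-parameter family of SD null strings (its SD twistor space, cf.\ \cite{Plebanski_Rozga}), and any two-parameter subfamily foliating $\mathcal{M}$ is a congruence in the sense of Definition~\ref{definicja_foliacji_strun}. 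Infinitely many such subfamilies exist.

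For the dichotomy I would keep the undotted-box identity uncontracted. With $C_{ABED}=0$ its totally symmetric part forces the curvature built from $Z_{A\dot{C}}$ to vanish, while its trace part reduces, up to a nonzero numerical factor, to $R\,m_{A}m_{B}=F_{AB}(M_{\dot{C}},\nabla M_{\dot{C}})$, an expression that vanishes identically once $M_{\dot{C}}=0$. Hence a nonexpanding congruence forces $R=0$; contrapositively, if $R\ne 0$ every congruence of SD null strings is expanding. When $R=0$ both types must be shown to occur: expanding congruences already exist in flat space (which has $C_{ABED}=0$, $R=0$) and, more generally, are read off the local normal form; a nonexpanding one exists because the null-string equations with $M_{\dot{C}}=0$ are then precisely the defining equations of a weak nonexpanding hyperheavenly space with $R=0$, whose general structure is described in Section~3. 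The dotted box, which involves $C_{AB\dot{C}\dot{D}}$, only supplies the extra compatibility relations between $Z_{A\dot{C}}$ and the traceless Ricci spinor and is not needed for the dichotomy itself.

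I expect the $R=0$ case to be the main obstacle: passing from ``the algebraic obstruction vanishes'' to the actual \emph{existence} of both a nonexpanding and an expanding congruence. The expanding one is easy (an explicit example), but for the nonexpanding one the cleanest route is to invoke the hyperheavenly reduction of Section~3 --- equivalently, the null-string calculus of \cite{Plebanski_Rozga, Chudecki_struny} --- rather than to integrate the system by hand; and the ``infinitely many'' claim, beyond the vanishing of the obstruction, rests on the Frobenius/Cartan--K\"{a}hler-type solvability of (\ref{rownania_strun_SD}), for which I would again defer to \cite{Plebanski_Rozga, Chudecki_struny}.
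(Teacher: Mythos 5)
The paper offers no argument of its own for this theorem --- its proof is the single line ``See \cite{Plebanski_Rozga}'' --- so your Ricci-identity sketch is necessarily a different route, and where it actually computes something it is sound and standard: contracting the undotted box identity for a spinor satisfying (\ref{rozwiniete_rownania_strun}) with $m^{C}$ kills every term proportional to $m_{C}$ and leaves $C_{ABCD}m^{C}m^{D}$ plus a multiple of $R\,m_{A}m_{B}$ equated to an expression built from $M_{\dot{C}}$, $\nabla M_{\dot{C}}$ and $Z\!\cdot\!M$, so with $C_{ABCD}=0$ a nonexpanding congruence forces $R=0$, which is exactly the contrapositive of the middle claim. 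The remaining two claims are existence statements, and there you only show that the algebraic obstruction disappears; you defer the actual existence (the three-parameter family of SD null strings, and the presence of both expanding and nonexpanding congruences when $R=0$) to \cite{Plebanski_Rozga,Chudecki_struny}, which is precisely what the paper does, so that is acceptable. Two supporting remarks, however, should be amended. First, your justification of the nonexpanding case via Section~3 is circular: the normal form (\ref{metryka_slaba_HH}) of a weak nonexpanding $\mathcal{HH}$-space is \emph{derived assuming} a nonexpanding congruence exists, so it cannot be used to produce one on a given space with $C_{ABCD}=0$, $R=0$. Second, the dotted box cannot be dismissed in the $R=0$ discussion: integrability of the recurrence $\nabla_{a}m_{B}=Z_{a}m_{B}$ also requires $C_{AB\dot{C}\dot{D}}m^{A}m^{B}=0$, so the traceless Ricci spinor is the genuine obstruction that the existence argument of \cite{Plebanski_Rozga} must handle (flat space or the vacuum heavenly case does not represent the general situation in which only $C_{ABCD}=0$ and $R=0$ are assumed), while your incidental claim that the totally symmetric part forces the ``curvature of $Z_{A\dot{C}}$'' to vanish is not needed and not correct as stated.
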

\begin{proof}
See \cite{Plebanski_Rozga}.
\end{proof}
We use the symbol $[\textrm{any}] \otimes [-]^{e}$ for SD spaces with $R \ne 0$ and the symbol $[\textrm{any}] \otimes [-]^{n}$ for SD spaces with $R = 0$.

\subsection{Two-sided conformally recurrent spaces}

The following definitions can be found in \cite{Ruse, Exacty}

\begin{Definicja}
\textsl{Recurrent space} is a nonflat space in which the Riemann tensor satisfies
\begin{equation}
\label{warunek_rekurencyjnosci}
\nabla_{m} R_{abcd} = r_{m} R_{abcd}
\end{equation}
If $r_{m}=0$ then the space is \textsl{symmetric}.
\end{Definicja}
\begin{Definicja}
\textsl{Conformally recurrent space} is space which is not conformally flat and for which the Weyl tensor satisfies
\begin{equation}
\nabla_{m} C_{abcd} = r_{m} C_{abcd}
\end{equation}
If $r_{m}=0$ then the space is \textsl{conformally symmetric}.
\end{Definicja}
It can be shown that (\ref{warunek_rekurencyjnosci}) is equivalent to the following relations
\begin{equation}
  \nabla_{m} C_{abcd} = r_{m} C_{abcd}, \ \nabla_{m} C_{ab} = r_{m} C_{ab}, \ \nabla_{m} R = r_{m} R 
\end{equation}
where $C_{ab}$ is the traceless Ricci tensor and $R$ is the curvature scalar. It implies that each recurrent (symmetric) space is conformally recurrent (conformally symmetric). Finally for the Lorentzian manifolds we have \cite{Exacty}
\begin{Definicja}
\label{Definicja_complex_recurrent}
\textsl{Complex recurrent space-time} is a nonflat space in which the SD Weyl tensor $C_{abcd}^{*}$ satisfies
\begin{equation}
\nabla_{m} C_{abcd}^{*} = r_{m} C_{abcd}^{*}
\end{equation}
\end{Definicja}
Because in the complex spaces and in the real neutral spaces SD and ASD Weyl spinors are unrelated, the following definition has been formulated in \cite{Plebanski_Przanowski_rec} 
\begin{Definicja} 
\label{Definicja_two_sided_recurrent}
Let $\mathcal{M}$ be a four-dimensional real smooth or complex analytic differential manifold equipped with the real or holomorphic metric $ds^{2}$. Then the pair $(\mathcal{M}, ds^2)$ is two-sided conformally recurrent Riemannian manifold if there exist vectors $r_{m}$ and $\dot{r}_{m}$ such that
\begin{subequations}
\begin{eqnarray}
\label{warunek_rekurecyjnosci_1}
&& \nabla_{m} C_{ABCD} = r_{m} C_{ABCD} 
\\
&& \nabla_{m} C_{\dot{A}\dot{B}\dot{C}\dot{D}} = \dot{r}_{m} C_{\dot{A}\dot{B}\dot{C}\dot{D}}
\end{eqnarray}
\end{subequations}
and $C_{ABCD}$ and $C_{\dot{A}\dot{B}\dot{C}\dot{D}}$ do not vanish simultaneously.
\end{Definicja}
In the case of real Lorentzian spaces, the Definition \ref{Definicja_two_sided_recurrent} and the Definition \ref{Definicja_complex_recurrent} are equivalent.

The following Lemma can be easily extracted from \cite{Plebanski_Przanowski_rec} although it has not been formulated there explicitly.

\begin{Lemat}
\label{Lemat_o_strunach}
Let $C_{ABCD} \ne 0$ and $\nabla_{m} C_{ABCD} = r_{m} C_{ABCD}$. Then $(\mathcal{M}, ds^{2})$ is equipped with nonexpanding congruence of SD null strings.
\end{Lemat}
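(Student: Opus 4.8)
The plan is to show that every Penrose spinor of $C_{ABCD}$ is necessarily \emph{recurrent}, and then to observe (this is essentially the converse of the Corollary following (\ref{rozwiniete_rownania_strun})) that a recurrent undotted spinor generates a nonexpanding congruence of SD null strings.

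First I would produce a Penrose spinor field. Since $C_{ABCD}$ is a nonzero totally symmetric undotted spinor, the quartic form $C_{ABCD}\xi^{A}\xi^{B}\xi^{C}\xi^{D}$ in the components of $\xi^{A}$ is not identically zero, hence has a root; equivalently, by (\ref{decomposition_of_SD_Weyl}), $C_{ABCD}$ admits a Penrose spinor field $m_{A}$, holomorphic on the open set where the Petrov--Penrose type of $C_{ABCD}$ is constant. Completing $m_{A}$ to a local spinor basis $(m_{A},n_{A})$ with $m^{A}n_{A}\neq 0$, I would write
\begin{equation*}
C_{ABCD}=\sum_{i=0}^{4}e_{i}\,S^{(i)}_{ABCD},
\end{equation*}
where $S^{(i)}_{ABCD}$ is the totally symmetric spinor built from $4-i$ factors $m_{A}$ and $i$ factors $n_{A}$; the five spinors $S^{(0)},\dots,S^{(4)}$ form a basis of the space of symmetric four-index undotted spinors, so the coefficients $e_{i}$ are well defined. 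Contracting this expansion with $m^{A}m^{B}m^{C}m^{D}$ and using that $m_{A}$ is a Penrose spinor together with $m^{A}m_{A}=0$, $m^{A}n_{A}\neq 0$, one gets $e_{4}=0$, so only $e_{0},\dots,e_{3}$ survive.

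The core step is a one-line differentiation. Writing $\nabla_{p}m_{A}=\alpha_{p}m_{A}+\beta_{p}n_{A}$ and $\nabla_{p}n_{A}=\gamma_{p}m_{A}+\delta_{p}n_{A}$ with respect to the basis, the Leibniz rule gives
\begin{equation*}
\nabla_{p}S^{(i)}_{ABCD}=\big[(4-i)\alpha_{p}+i\delta_{p}\big]S^{(i)}_{ABCD}+(4-i)\beta_{p}\,S^{(i+1)}_{ABCD}+i\gamma_{p}\,S^{(i-1)}_{ABCD}.
\end{equation*}
Let $k\leq 3$ be the largest index with $e_{k}\neq 0$ (it exists because $C_{ABCD}\neq 0$). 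In $\nabla_{p}C_{ABCD}$ the basis element $S^{(k+1)}$ can arise only from the term $(4-i)\beta_{p}e_{i}S^{(i+1)}$ with $i=k$, i.e. with coefficient $(4-k)\beta_{p}e_{k}$, whereas $\nabla_{p}C_{ABCD}=r_{p}C_{ABCD}=\sum_{i=0}^{k}r_{p}e_{i}S^{(i)}_{ABCD}$ contains $S^{(k+1)}$ with coefficient $r_{p}e_{k+1}=0$. Since $4-k\geq 1$ and $e_{k}\neq 0$, this forces $\beta_{p}=0$; that is, $\nabla_{p}m_{A}=\alpha_{p}m_{A}$, so $m_{A}$ is recurrent. (As a by-product the SD Weyl spinor must then be of type $[\textrm{D}]$ or $[\textrm{N}]$, since by the theorems of Section~\ref{section_congruences_of_null_strings} a recurrent spinor generates a nonexpanding congruence and is therefore a multiple Penrose spinor, in accordance with \cite{Plebanski_Przanowski_rec}.)

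To conclude, a recurrent $m_{A}$ (spinorially $\nabla_{A\dot{C}}m_{B}=\alpha_{A\dot{C}}m_{B}$) satisfies $m^{A}m^{B}\nabla_{A\dot{C}}m_{B}=\alpha_{A\dot{C}}m^{A}(m^{B}m_{B})=0$, so by (\ref{rownania_strun_SD}) it generates a congruence of SD null strings; inserting $\nabla_{A\dot{C}}m_{B}=\alpha_{A\dot{C}}m_{B}$ into the general relation (\ref{rozwiniete_rownania_strun}) and contracting with $m^{B}$ (using $m^{B}m_{B}=0$) yields $m_{A}M_{\dot{C}}=0$, hence $M_{\dot{C}}=0$ and the congruence is nonexpanding. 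I do not expect a real obstacle: the calculation is short, and the one device that makes it clean --- singling out the top index $k$, so that the recurrence condition can be tested on the single ``extra'' basis element $S^{(k+1)}$ --- disposes of all Petrov--Penrose types simultaneously. The only points that require the usual care are the local existence and holomorphicity of the Penrose spinor field $m_{A}$ (standard where the algebraic type of $C_{ABCD}$ is constant, and extended elsewhere by continuity) and the bookkeeping of the symmetrizations in the Leibniz computation.
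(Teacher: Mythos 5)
Your proposal is correct, but it takes a genuinely different route from the paper's. The paper proves the lemma by an iterative case analysis along the degeneration chain: it inserts $C_{ABCD}=a_{(A}b_{B}c_{C}d_{D)}$ into (\ref{warunek_rekurecyjnosci_1}), contracts with $a^{A}a^{B}a^{C}a^{D}$, and at each stage concludes that either $a_{A}$ is recurrent or the type degenerates further, passing through $[\textrm{II}]$, $[\textrm{III}]$ and finally $[\textrm{N}]$, where recurrence is forced; the nonexpanding congruence is then the one generated by this recurrent spinor. You instead fix an arbitrary Penrose spinor field $m_{A}$, complete it to a dyad, expand $C_{ABCD}$ in the symmetric basis $S^{(i)}$, and read off the would-be top component $S^{(k+1)}$ of $\nabla_{p}C_{ABCD}$, which kills $\beta_{p}$ in a single stroke and handles all Petrov--Penrose types simultaneously; as a by-product every Penrose spinor field turns out recurrent, which (via the theorem that a generator of a nonexpanding congruence is a multiple Penrose spinor) already restricts the type to $[\textrm{D}]$ or $[\textrm{N}]$ and thus anticipates Theorem \ref{Twierdzenie_o_typie}, whereas the paper obtains that restriction only in a separate second argument. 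What the paper's route buys is the absence of dyad bookkeeping; what yours buys is uniformity and the stronger intermediate statement. One point you should make explicit: the $S^{(k+1)}$-component of $\nabla_{p}C_{ABCD}$ also receives the term $(\partial_{p}e_{k+1})S^{(k+1)}$, so you need $e_{k+1},\dots,e_{4}$ to vanish identically on a neighbourhood and not merely at the point where $k$ is defined; this is guaranteed on the open set where the type, and hence the multiplicity of your chosen root $m_{A}$, is locally constant --- exactly the locus on which you construct $m_{A}$ --- and it is the same level of genericity that the paper's pointwise ``either/or'' branching implicitly assumes. Your closing step, that a recurrent $m_{A}$ satisfies (\ref{rownania_strun_SD}) and forces $M_{\dot{C}}=0$ in (\ref{rozwiniete_rownania_strun}), is the converse of the paper's Corollary and is argued correctly.
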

\begin{proof}
Putting $C_{ABCD} = a_{(A} b_{B} c_{C} d_{D)}$ in (\ref{warunek_rekurecyjnosci_1}) and contracting it with $a^{A}a^{B}a^{C}a^{D}$ we obtain two solutions. The first solution is $a^{A} \nabla_{m} a_{A} = 0$, so $a^{A}$ is a recurrent spinor field. Hence, it generates a nonexpanding congruence of SD null strings. The second solution gives $a^{A}b_{A}=0 \ \Longleftrightarrow \ b_{A} \sim a_{A}$. Hence, $a_{A}$ is double Penrose spinor. In this case Petrov-Penrose type of the SD Weyl spinor is $[\textrm{II}]$ so $C_{ABCD} = a_{(A} a_{B} c_{C} d_{D)}$. Feeding (\ref{warunek_rekurecyjnosci_1}) with $C_{ABCD} = a_{(A} a_{B} c_{C} d_{D)}$ and contracting it with $a^{A}a^{B}a^{C}$ one find that $a_{A}$ is a recurrent spinor field or $c_{A} \sim a_{A}$ so $a_{A}$ is triple Penrose spinor. In this case algebraic type of the SD Weyl spinor is  $[\textrm{III}]$, equivalently $C_{ABCD} = a_{(A} a_{B} a_{C} d_{D)}$. Putting this form of $C_{ABCD}$ once again in (\ref{warunek_rekurecyjnosci_1}) and contracting with $a^{A}a^{B}$ one gets that $a_{A}$ is recurrent or $d_{A} \sim a_{A}$ so $a_{A}$ is quadruple Penrose spinor. Hence, the type is $[\textrm{N}]$ and  $C_{ABCD} = a_{A} a_{B} a_{C} a_{D}$. The last step is to put $C_{ABCD} = a_{A} a_{B} a_{C} a_{D}$ in (\ref{warunek_rekurecyjnosci_1}) and contracting it with $a^{A}$. It proves, that $a_{A}$ is recurrent. 
\end{proof}

From Lemma \ref{Lemat_o_strunach} it follows that if a space is two-sided conformally recurrent and $C_{ABCD} \ne 0$ then SD Weyl spinor could be of the type $[\textrm{II}]^{n}$, $[\textrm{III}]^{n}$, $[\textrm{N}]^{n}$, $[\textrm{D}]^{n}$, $[\textrm{D}]^{ne}$ or $[\textrm{D}]^{nn}$. In fact, only types $[\textrm{N}]^{n}$ and $[\textrm{D}]^{nn}$ are admitted. Indeed, one can prove the following
\begin{Twierdzenie} 
\label{Twierdzenie_o_typie}
Let $C_{ABCD} \ne 0$ and $\nabla_{m} C_{ABCD} = r_{m} C_{ABCD}$. Then $C_{ABCD}$ is of the type $[\textrm{D}]^{nn}$ or $[\textrm{N}]^{n}$.
\end{Twierdzenie}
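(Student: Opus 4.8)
The plan is to feed the recurrence hypothesis $\nabla_{m}C_{ABCD}=r_{m}C_{ABCD}$ directly with the decomposition of $C_{ABCD}$ adapted to the congruence produced by Lemma~\ref{Lemat_o_strunach}. By that Lemma the space carries a nonexpanding congruence of SD null strings whose generator $a_{A}$ is a recurrent spinor field, $\nabla_{m}a_{A}=Z_{m}a_{A}$, and which (being nonexpanding) is a multiple Penrose spinor; hence the SD Weyl spinor is of one of the types $[\textrm{II}]$, $[\textrm{III}]$, $[\textrm{D}]$, $[\textrm{N}]$, in each of which $a_{A}$ has multiplicity at least two, so
\begin{equation}
\label{postac_C_z_mu}
C_{ABCD}=a_{(A}a_{B}\,\mu_{CD)}
\end{equation}
for a suitable symmetric ``remainder'' $\mu_{CD}=\mu_{(CD)}$: $\mu_{CD}=b_{(C}c_{D)}$ with $a_{A},b_{A},c_{A}$ mutually non-proportional for type $[\textrm{II}]$, $\mu_{CD}=a_{(C}b_{D)}$ with $b_{A}\not\sim a_{A}$ for type $[\textrm{III}]$, $\mu_{CD}=b_{C}b_{D}$ with $b_{A}\not\sim a_{A}$ for type $[\textrm{D}]$, and $\mu_{CD}=a_{C}a_{D}$ for type $[\textrm{N}]$.

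First I would substitute \eqref{postac_C_z_mu} into the recurrence condition. Using $\nabla_{m}a_{A}=Z_{m}a_{A}$ and the Leibniz rule one gets $\nabla_{m}C_{ABCD}=2Z_{m}C_{ABCD}+a_{(A}a_{B}\nabla_{m}\mu_{CD)}$, so $\nabla_{m}C_{ABCD}=r_{m}C_{ABCD}$ reduces to $a_{(A}a_{B}\,\tau_{CD)}=0$, where $\tau_{CD}:=\nabla_{m}\mu_{CD}-(r_{m}-2Z_{m})\mu_{CD}$ is symmetric in $C,D$ for each fixed $m$. Next I would invoke the elementary algebraic fact that, for a non-zero spinor $a_{A}$, the totally symmetrized product $a_{(A}a_{B}\tau_{CD)}$ of $a_{A}a_{B}$ with a symmetric $\tau_{CD}$ vanishes only if $\tau_{CD}=0$ — which one sees by expanding $\tau_{CD}$ in a spinor dyad $\{a_{A},n_{A}\}$, $a^{A}n_{A}\ne 0$, and noting that the symmetric four-index spinors $a_{A}a_{B}a_{C}a_{D}$, $a_{(A}a_{B}a_{C}n_{D)}$, $a_{(A}a_{B}n_{C}n_{D)}$ are linearly independent. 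This yields that $\mu_{CD}$ is itself recurrent, $\nabla_{m}\mu_{CD}=(r_{m}-2Z_{m})\mu_{CD}$.

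It then remains to read off what recurrence of $\mu_{CD}$ imposes on its factors. Decomposing $\nabla_{m}b_{C}$ (and, for type $[\textrm{II}]$, also $\nabla_{m}c_{C}$) in a spinor dyad built out of the factors of $\mu_{CD}$ and requiring $\nabla_{m}\mu_{CD}$ to be proportional to $\mu_{CD}$ annihilates the ``off-diagonal'' coefficients, which forces $b_{C}$ (and $c_{C}$) to be recurrent as well. For types $[\textrm{II}]$ and $[\textrm{III}]$ this is fatal: there $b_{C}$ is a \emph{simple} Penrose spinor, while a recurrent spinor field satisfies the SD null string equations and hence generates a nonexpanding congruence of SD null strings (see Section~\ref{section_congruences_of_null_strings}), which by the Theorem asserting that nonexpanding congruences are generated only by \emph{multiple} Penrose spinors is impossible; so types $[\textrm{II}]$ and $[\textrm{III}]$ are excluded. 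For type $[\textrm{D}]$ one obtains that $b_{C}$ is recurrent, so $a_{A}$ and $b_{A}$ generate two distinct nonexpanding congruences of SD null strings and the type is $[\textrm{D}]^{nn}$. For type $[\textrm{N}]$ the relation $\nabla_{m}\mu_{CD}=(r_{m}-2Z_{m})\mu_{CD}$ with $\mu_{CD}=a_{C}a_{D}$ is satisfied identically, so this case survives as $[\textrm{N}]^{n}$. Collecting the four cases yields the claim.

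The step I expect to be the main obstacle is the exclusion of $[\textrm{II}]$ and $[\textrm{III}]$: the spinor algebra by itself delivers only the recurrence of the subsidiary spinor, and the genuine contradiction appears only after this is combined with the structural input that a nonexpanding congruence of SD null strings is generated by a \emph{multiple} Penrose spinor; some care is also needed to make sure the ``remainder'' $\mu_{CD}$ in \eqref{postac_C_z_mu} has been set up so that the dyads used in the decompositions are legitimate, i.e.\ that the pairs of spinors involved are genuinely independent in each subtype.
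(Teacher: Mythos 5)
Your proposal is correct and follows essentially the same route as the paper: after Lemma \ref{Lemat_o_strunach} the paper also writes $C_{ABCD}=a_{(A}a_{B}b_{C}c_{D)}$ with $a_{A}$ recurrent, derives that the symmetric remainder $b_{(C}c_{D)}$ is itself recurrent (your $\tau_{CD}=0$), and then uses the fact that a recurrent spinor generates a nonexpanding congruence and hence must be a multiple Penrose spinor to land on $[\textrm{D}]^{nn}$ or $[\textrm{N}]^{n}$. The only difference is bookkeeping — you split into Petrov subtypes up front, whereas the paper extracts the same branches by contracting the recurrence relation with $b^{C}b^{D}$ — so the substance of the argument coincides.
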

\begin{proof}
From Lemma \ref{Lemat_o_strunach} we conclude, that if $C_{ABCD} \ne 0$ and $\nabla_{m} C_{ABCD} = r_{m} C_{ABCD}$ then SD Weyl is algebraically degenerated and its multiple Penrose spinor is recurrent
\begin{equation}
\label{pomocnicza_postac_C}
C_{ABCD} = a_{(A} a_{B} b_{C} c_{D)}, \ \nabla_{m} a_{A} = Z_{m} a_{A}
\end{equation}
$Z_{m}$ is the Sommers vector. Putting (\ref{pomocnicza_postac_C}) in (\ref{warunek_rekurecyjnosci_1}) one finds
\begin{equation}
a_{(A} a_{B} \Sigma_{CD)}=0, \ \Sigma_{CD} := \nabla_{m} (b_{C}c_{D}) + (2Z_{m}- r_{m} ) b_{C}c_{D}  
\end{equation}
Hence, $\Sigma_{(AB)}=0$ what implies
\begin{equation}
\label{pomocnicze_rownanie}
\nabla_{m} (b_{(C}c_{D)}) + (2Z_{m}- r_{m} ) b_{(C}c_{D)}  = 0
\end{equation}
Contracting (\ref{pomocnicze_rownanie}) with $b^{C}b^{D}$ one finds two solutions: $(i)$: $b^{C} \nabla_{m} b_{C} = 0$ or $(ii)$: $b^{D}c_{D}=0$.

$(i)$. The first solution implies that $b_{D}$ is recurrent so it is multiple Penrose spinor. If $b_{D}$ is double Penrose spinor, then it is proportional to $c_{D}$. The nonexpanding congruence of SD null strings generated by the spinor $b_{D}$ is different to the congruence generated by the spinor $a_{A}$, so $\nabla_{m} b_{D} = S_{m} b_{D}$, where $S_{m}$ is Sommers vector of this congruence. Hence, $C_{ABCD} = a_{(A} a_{B} b_{C} b_{D)}$ and the type is $[\textrm{D}]^{nn}$. If $b_{C}$ is triple, then it must be proportional to $a_{A}$ so $C_{ABCD} = a_{(A} a_{B} a_{C} c_{D)}$ and Eq. (\ref{pomocnicze_rownanie}) takes the form
\begin{equation}
\label{pomocnicze_rownanie_2}
\nabla_{m} c_{D} + (3Z_{m}- r_{m} ) c_{D}  = 0
\end{equation}
Eq. (\ref{pomocnicze_rownanie_2}) implies that $c_{D}$ is recurrent so it is multiple Penrose spinor and it must be proportional to $a_{A}$. In this case we land at the type $[\textrm{N}]^{n}$.

$(ii)$. The second solution implies that $b_{D}$ is proportional to $c_{D}$, so it is double Penrose spinor. The SD Weyl spinor takes the form $C_{ABCD} = a_{(A} a_{B} b_{C} b_{D)}$ and is of the type $[\textrm{D}]^{n}$. However, if we consider (\ref{pomocnicze_rownanie}) written for $c_{D} = b_{D}$ and contracted with $b^{D}$, we find that $b_{D}$ is recurrent. Finally, the type reduces to $[\textrm{D}]^{nn}$.
\end{proof}

From Eq. (\ref{pomocnicze_rownanie}) one finds relation between vector $r_{m}$ and Sommers vectors of the congruences of null strings 
\begin{subequations}
\begin{eqnarray}
\label{wektor_r_przez_Sommersy}
\textrm{for the type } [\textrm{D}]^{nn}: && r_{m} = 2Z_{m} + 2S_{m}
\\ 
\textrm{for the type } [\textrm{N}]^{n}: && r_{m} = 4 Z_{m}
\end{eqnarray}
\end{subequations}

Theorem analogous to the Theorem \ref{Twierdzenie_o_typie} holds true for the ASD Weyl spinor. Moreover, the existence of the nonexpanding congruence of null strings implies, that for the type $[\textrm{D}]^{nn}$ the curvature scalar $R \ne 0$ and for the type $[\textrm{N}]^{n}$ the curvature scalar $R= 0$ \cite{Plebanski_Przanowski_rec,Chudecki_struny}. Hence, the spaces with mixed curvature, i.e., spaces of the types $[\textrm{D}]^{nn} \otimes [\textrm{N}]^{n}$ or $[\textrm{N}]^{n} \otimes [\textrm{D}]^{nn}$ cannot be two-sided conformally recurrent. The only possibilities are $[\textrm{N}]^{n} \otimes [\textrm{N}]^{n}$, $[\textrm{D}]^{nn} \otimes [\textrm{D}]^{nn}$, $[\textrm{N}]^{n} \otimes [-]^{n}$ and $[\textrm{D}]^{nn} \otimes [-]^{e}$ (compare Theorem \ref{Uwaga_1}). All of these types except the last one have been explicitly found in \cite{Plebanski_Przanowski_rec}. 

To find the metric of the two-sided conformally recurrent space of the type  $[\textrm{D}]^{nn} \otimes [-]^{e}$ we use the \textsl{hyperheavenly spaces formalism}. As a starting structure we consider $(\mathcal{M}, ds^{2})$ equipped with one congruence of SD null strings. Such a space is the algebraically degenerated \textsl{Walker space} or - equivalently - \textsl{weak nonexpanding hyperheavenly space}.

\section{Weak nonexpanding hyperheavenly spaces}
\setcounter{equation}{0}

\subsection{Spaces of the types $[\textrm{deg}]^{n} \otimes [\textrm{any}]$}

\subsubsection{The metric, connection and curvature}

\textsl{Weak hyperheavenly space} has been defined in \cite{Chudecki_Przanowski_Walkery}.

\begin{Definicja}
\textsl{Weak hyperheavenly space (weak $\mathcal{HH}$-space)} is a 4-dimensional complex analytic differential manifold $\mathcal{M}$ endowed with a holomorphic metric $ds^{2}$ satisfying the following conditions:
\begin{itemize}
\item there exists a 2-dimensional holomorphic totally null self-dual integrable distribution given by the Pfaff system
\begin{equation}
\label{condition_1}
m_{A} \, g^{A \dot{B}} = 0 \ \ \ \ , \ \ \ \ m_{A} \ne 0
\end{equation}
\item the self-dual Weyl spinor $C_{ABCD}$ is algebraically degenerate and $m_{A}$ is a multiple Penrose spinor i.e.
\begin{equation}
\label{condition_2}
C_{ABCD} \, m^{A} m^{B} m^{C} =0 
\end{equation}
\end{itemize}
\end{Definicja}
If we additionally assume that the congruence of SD null strings is nonexpanding then the metric of such a \textsl{nonexpanding weak $\mathcal{HH}$-space} without any loss of generality can be brought to the form  \cite{Plebanski_Rozga,Chudecki_Przanowski_Walkery} 
\begin{equation}
\label{metryka_slaba_HH}
\frac{1}{2} \, ds^{2}=  -dp^{\dot{A}}  dq_{\dot{A}} + Q^{\dot{A}\dot{B}} \, dq_{\dot{A}}  dq_{\dot{B}} 
\end{equation}
where $Q^{\dot{A}\dot{B}} = Q^{(\dot{A}\dot{B})}$ are holomorphic functions and $(q_{\dot{A}}, p_{\dot{B}})$ are complex variables. Coordinates $p_{\dot{A}}$ are coordinates on null strings while coordinates $q_{\dot{A}}$ label the null strings. Equivalently, leafs of foliation are given by $q_{\dot{A}}=\textrm{const}$. Introduce the null tetrad
\begin{eqnarray}
\label{tetrada_Plebanskiego}
&& [e^{3} , e^{1}] = -\frac{1}{\sqrt{2}} \, g^{2}_{\ \, \dot{A}} = dq_{\dot{A}} 
\\ \nonumber
&& [  e^{4}, e^{2} ]= \frac{1}{\sqrt{2}} \, g^{1 \dot{A}} =-dp^{\dot{A}} + Q^{\dot{A} \dot{B}} \, dq_{\dot{B}} 
\end{eqnarray}
Tetrad (\ref{tetrada_Plebanskiego}) is so called \textsl{Plebański tetrad}. If we define operators
\begin{eqnarray}
&& \partial_{\dot{A}} := \frac{\partial}{\partial p^{\dot{A}}}, \ \eth_{\dot{A}} :=  \left( \frac{\partial}{\partial q^{\dot{A}}} - Q_{\dot{A}}^{\ \ \dot{B}} \partial_{\dot{B}} \right)
\\ \nonumber
&& \partial^{\dot{A}} := \frac{\partial}{\partial p_{\dot{A}}}, \ 
\eth^{\dot{A}} :=  \left( \frac{\partial}{\partial q_{\dot{A}}} + Q^{\dot{A} \dot{B}} \partial_{\dot{B}} \right)
\end{eqnarray}
then the relation between null tetrads $(\partial_{1} , \partial_{2}, \partial_{3} , \partial_{4})$ and $(\partial_{\dot{A}}, \eth_{\dot{A}})$ reads
\begin{equation}
[   \partial_{4} , \partial_{2} ] = -\partial_{\dot{A}}  , \
[   \partial_{3}  ,\partial_{1} ]  = \eth^{\dot{A}} 
\end{equation}
Decomposing the connection forms according to the formulas
\begin{equation}
\label{rozklad_form_koneksji}
\mathbf{\Gamma}_{AB} = -\frac{1}{2} \mathbf{\Gamma}_{AB M \dot{N}} \, g^{M \dot{N}}
\ , \ \ \ 
\mathbf{\Gamma}_{\dot{A}\dot{B}} = -\frac{1}{2} \mathbf{\Gamma}_{\dot{A}\dot{B} M \dot{N}} \, g^{M \dot{N}}
\end{equation}
one finds that nonzero connection coefficients are 
\begin{equation}
\label{rozpisane_wspolczynniki_koneksji_spinorowej}
\mathbf{\Gamma}_{122 \dot{D}} = - \frac{1}{\sqrt{2}} \,  \partial^{\dot{A}}  Q_{\dot{A}\dot{D}}, \
\mathbf{\Gamma}_{222 \dot{D}} = -\sqrt{2}  \, \eth^{\dot{A}} Q_{\dot{A}\dot{D}}, \
\mathbf{\Gamma}_{\dot{A} \dot{B} 2 \dot{D}} =  \sqrt{2} \,  
 \partial_{(\dot{A}} Q_{\dot{B})\dot{D}}  
\end{equation}
The SD curvature coefficients $C^{(i)}$ and the curvature scalar $R$ read
\begin{eqnarray}
\label{krzywizna}
&& C^{(5)}  = C^{(4)} = 0, \ C^{(3)} = \frac{R}{6} = -\frac{1}{3} \, \partial_{\dot{A}} \partial_{\dot{B}} Q^{\dot{A} \dot{B}}, \
 C^{(2)}  = - \,  \partial^{\dot{A}}  \eth^{\dot{B}} Q_{\dot{A} \dot{B}}
\\ \nonumber
&& \frac{1}{2} \, C^{(1)}  = 
- \eth^{\dot{A}}  \eth^{\dot{B}} Q_{\dot{A} \dot{B}}  +
 (\eth^{\dot{A}} Q_{\dot{A} \dot{B}})
(\partial_{\dot{C}} Q^{\dot{B} \dot{C}}) , \ C_{\dot{A}\dot{B}\dot{C}\dot{D}} = -  
\partial_{(\dot{A}}\partial_{\dot{B}} Q_{\dot{C} \dot{D})}
\end{eqnarray}
(where $C^{(5)} \sim C_{1111}$, $C^{(4)} \sim C_{1112}$, $C^{(3)} \sim C_{1122}$,...). Finally, the components of the traceless Ricci tensor are given by the formulas
\begin{equation}
\label{Traceless_Ricci}
C_{11 \dot{A} \dot{B}} =0, \ C_{12 \dot{A} \dot{B}} = - \frac{1}{2}
 \partial_{(\dot{A}} \partial^{\dot{C}} Q_{\dot{B}) \dot{C}} , \
C_{22 \dot{A} \dot{B}} = -  \partial_{(\dot{A}}  \eth^{\dot{C}} Q_{\dot{B}) \dot{C}}
\end{equation}
The members $(\partial_{2}, \partial_{4})$ of Plebański tetrad are tangent to the null strings. In this sense Plebański tetrad is \textsl{adapted} to the congruence of SD null strings. Spinor field which generates the congruence has the form $m_{A} = (0,m)$, $m \ne 0$. Petrov-Penrose types of SD Weyl spinor are given by the conditions
\begin{eqnarray}
\label{warunki_na_typy}
\textrm{type [II]}: && C^{(3)} \ne 0, 2C^{(2)} C^{(2)} - 3 C^{(1)} C^{(3)} \ne 0
\\ \nonumber
\textrm{type [D]}: && C^{(3)} \ne 0, 2C^{(2)} C^{(2)} - 3 C^{(1)} C^{(3)} = 0
\\ \nonumber
\textrm{type [III]}: && C^{(3)} = 0, C^{(2)} \ne 0
\\ \nonumber
\textrm{type [N]}: && C^{(3)} =  C^{(2)} = 0, C^{(1)} \ne 0
\\ \nonumber
\textrm{type } [-]: && C^{(3)} =  C^{(2)} =  C^{(1)} = 0
\end{eqnarray}
[\textbf{Remark}: The metric (\ref{metryka_slaba_HH}) is double Kerr-Schild-equivalent to the flat metric \cite{Plebanski_Schild}. Therefore it belongs to the \textsl{double Kerr-Schild (dKS)} class. If
\begin{equation}
\label{warunek_na_pojedynczego_Kerra}
Q^{\dot{A}\dot{B}} Q_{\dot{A}\dot{B}}=0
\end{equation}
then the metric (\ref{metryka_slaba_HH}) exhibits the \textsl{single Kerr-Schild (sKS)} structure.]

\subsubsection{Coordinate gauge freedom}

The metric (\ref{metryka_slaba_HH}) remains invariant under the following transformations of the coordinates
\begin{equation}
\label{gauge}
q'_{\dot{A}} = q'_{\dot{A}} (q_{\dot{B}}), \ p'^{\dot{A}} = D^{-1 \ \ \dot{A}}_{\ \ \; \dot{B}} \, p^{\dot{B}} + \sigma^{\dot{A}}
\end{equation}
where $\sigma^{\dot{A}}=\sigma^{\dot{A}}(q_{\dot{B}})$ are arbitrary functions and 
\begin{eqnarray}
\nonumber
D_{\dot{A}}^{\ \ \dot{B}} 
&:=& \frac{\partial q'_{\dot{A}}}{\partial q_{\dot{B}}} 
= \Delta \, \frac{\partial q^{\dot{B}}}{\partial q'^{\dot{A}}}
= \Delta \, \frac{\partial p'_{\dot{A}}}{\partial p_{\dot{B}}}
= \frac{\partial p^{\dot{B}}}{\partial p'^{\dot{A}}} , \ \Delta := \det 
\left( \frac{\partial q'_{\dot{A}}}{\partial q_{\dot{B}}} \right)
 = \frac{1}{2} \, D_{\dot{A}\dot{B}} D^{\dot{A}\dot{B}}
\\ 
D^{-1 \ \ \dot{B}}_{\ \ \; \dot{A}} 
&=& \frac{\partial q_{\dot{A}}}{\partial q'_{\dot{B}}}
= \Delta^{-1} \, \frac{\partial q'^{\dot{B}}}{\partial q^{\dot{A}}}
= \frac{\partial p'^{\dot{B}}}{\partial p^{\dot{A}}}
= \Delta^{-1} \, \frac{\partial p_{\dot{A}}}{\partial p'_{\dot{B}}}
\end{eqnarray}
Hence
\begin{eqnarray}
 D^{\, \dot{A}}_{\ \ \dot{B}} &:=& D_{\dot{M}}^{\ \ \dot{N}} \in^{\dot{M} \dot{A}} \in_{\dot{B} \dot{N}} = - \Delta \, D^{-1 \ \ \dot{A}}_{\ \ \; \dot{B}}
\\ \nonumber
D^{-1 \; \dot{A}}_{\ \ \ \ \ \dot{B}} &:=& D^{-1 \ \ \dot{N}}_{\ \ \; \dot{M}} \in^{\dot{M} \dot{A}} \in_{\dot{B} \dot{N}} = - \frac{1}{\Delta} \, D_{\dot{B}}^{\ \ \dot{A}}
\end{eqnarray}
Functions $Q^{\dot{A}\dot{B}}$ transform under (\ref{gauge}) as follows
\begin{equation}
\label{transformacja_Q}
Q'^{\dot{A} \dot{B}} = D^{-1 \ \ \dot{A}}_{\ \ \; \dot{R}} \, D^{-1 \ \ \dot{B}}_{\ \ \; \dot{S}}
Q^{\dot{R} \dot{S}} + D^{-1 \ \ ( \dot{A}}_{\ \ \; \dot{R}} \, 
\frac{\partial p'^{\dot{B})}}{\partial q_{\dot{R}}} 
\end{equation}
Transformations (\ref{gauge}) are equivalent to the spinorial transformations 
\begin{eqnarray}
\label{transformacja_spinorowa}
L^{A}_{\ \; B} &=& \left[ \begin{array}{cc}
       \Delta^{-\frac{1}{2}} & \tilde{h}  \Delta^{\frac{1}{2}}   \\
       0 &  \Delta^{\frac{1}{2}}  
       \end{array} \right], \ 2\tilde{h} := \frac{\partial \sigma^{\dot{R}}}{\partial q'^{\dot{R}}}
\\ \nonumber
M^{\dot{A}}_{\ \; \dot{B}} &=& \Delta^{\frac{1}{2}} \, D^{-1 \ \ \dot{A}}_{\ \ \; \dot{B}}
\end{eqnarray}
Hence, dotted and undotted spinors transform according to the formulas
\begin{equation}
m'^{A} = L^{A}_{\ \; B} m^{B}, \ m'^{\dot{A}} = M^{\dot{A}}_{\ \; \dot{B}} m^{\dot{B}}
\end{equation}

\subsection{Spaces of the types $[\textrm{deg}]^{n} \otimes [-]$}

Weak $\mathcal{HH}$-spaces formalism (also called \textsl{the second Plebański's formalism}) has one major advantage. Using this formalism one can very easily pass to the algebraically degenerated nonexpanding SD spaces, i.e., spaces of the types $[\textrm{deg}]^{n} \otimes [-]$. This transition is realized by the algebraic condition $C_{\dot{A}\dot{B}\dot{C}\dot{D}}=0$ (\ref{krzywizna}) which can be immediately solved \cite{Chudecki_Przanowski_Walkery, Plebanski_Przanowski_rec}. For the spaces of the types $[\textrm{deg}]^{n} \otimes [-]$ one finds
\begin{equation}
\label{ogolne_rozwiazanie_na_Q}
Q^{\dot{A}\dot{B}} = A^{\dot{N}} p_{\dot{N}} p^{\dot{A}} p^{\dot{B}} + B^{\dot{N}(\dot{A}} p^{\dot{B})} p_{\dot{N}} + B p^{\dot{A}} p^{\dot{B}} + C^{\dot{A}\dot{B}\dot{N}} p_{\dot{N}} +C^{(\dot{A}} p^{\dot{B})} + E^{\dot{A}\dot{B}}
\end{equation}
where $A^{\dot{N}}$, $B^{\dot{N}\dot{A}} = B^{(\dot{N}\dot{A})}$, $B$, $C^{\dot{A}\dot{B}\dot{N}}=C^{(\dot{A}\dot{B}\dot{N})}$, $C^{\dot{A}}$, $E^{\dot{N}\dot{A}} = E^{(\dot{N}\dot{A})}$ are arbitrary functions of the variables $q^{\dot{A}}$. Feeding (\ref{krzywizna}) and (\ref{Traceless_Ricci}) with (\ref{ogolne_rozwiazanie_na_Q}) one gets
\begin{eqnarray}
\label{postac_R}
&& \frac{R}{6} = C^{(3)} = 4 A_{\dot{N}} p^{\dot{N}} - 2B , \ C^{(2)}  = - \,  \partial^{\dot{A}}  Q_{\dot{A}}, \ 
 \frac{1}{2} C^{(1)}  = - \eth^{\dot{A}}  Q_{\dot{A}}  + Q_{\dot{B}} \, \partial_{\dot{C}} Q^{\dot{B} \dot{C}} \ \ \ \ \
\\ \nonumber
&& C_{12\dot{A}\dot{B}} = 2 A_{(\dot{A}} p_{\dot{B})} + B_{\dot{A}\dot{B}}, \ 
C_{22 \dot{A} \dot{B}} = -  \partial_{(\dot{A}}   Q_{\dot{B})}
\end{eqnarray}
where
\begin{eqnarray}
\label{postac_eth_Q}
Q_{\dot{B}} &:=& \eth^{\dot{A}} Q_{\dot{A}\dot{B}} 
\\ \nonumber
&=& \frac{\partial}{\partial q_{\dot{A}}} \left( A^{\dot{N}} p_{\dot{N}} p_{\dot{A}} p_{\dot{B}} + B^{\dot{N}}_{\ (\dot{A}} p_{\dot{B})} p_{\dot{N}} + B p_{\dot{A}} p_{\dot{B}} + C^{\dot{N}}_{\ \dot{A}\dot{B}} p_{\dot{N}} +C_{(\dot{A}} p_{\dot{B})} + E_{\dot{A}\dot{B}} \right) \ \ \ \ \ \ \ 
\\ \nonumber
&&-A_{\dot{C}}C^{\dot{C}\dot{A}\dot{X}}p_{\dot{X}}p_{\dot{A}}p_{\dot{B}} + \frac{1}{2} A^{\dot{C}} C^{\dot{A}} p_{\dot{C}}p_{\dot{A}}p_{\dot{B}} - \frac{1}{2} BB^{\dot{A}\dot{C}} p_{\dot{C}}p_{\dot{A}}p_{\dot{B}}
\\ \nonumber
&& -B_{\dot{B}\dot{C}} C^{\dot{A}\dot{C}\dot{X}} p_{\dot{A}} p_{\dot{X}} - A_{\dot{C}} E^{\dot{A}\dot{C}} p_{\dot{A}} p_{\dot{B}} + A^{\dot{N}} E^{\dot{A}}_{\ \dot{B}} p_{\dot{A}} p_{\dot{N}} - \frac{3}{4} C_{\dot{X}} B^{\dot{X}\dot{Z}} p_{\dot{Z}} p_{\dot{B}}
\\ \nonumber
&&-\frac{1}{4}C^{\dot{A}} B^{\dot{X}}_{\ \dot{B}} p_{\dot{A}}p_{\dot{X}} - p_{\dot{X}} \left( C^{\dot{A}\dot{C}\dot{X}}C_{\dot{A}\dot{C}\dot{B}} + \frac{1}{4}C_{\dot{B}}C^{\dot{X}} - BE^{\dot{X}}_{\ \dot{B}}   \right)
\\ \nonumber
&& -  E^{\dot{A}\dot{C}}B_{\dot{A}\dot{C}} p_{\dot{B}}  - \frac{1}{2} C_{\dot{B}}^{\ \dot{A}\dot{C}} C_{\dot{A}} p_{\dot{C}} -E^{\dot{A}\dot{C}} C_{\dot{A}\dot{C}\dot{B}} - \frac{1}{2} C^{\dot{A}} E_{\dot{A}\dot{B}}
\end{eqnarray}
After tedious but straightforward calculations one arrives at the following transformation formulas
\begin{subequations}
\label{wzory_transformacyjne_ogolne}
\begin{eqnarray}
 A'^{\dot{N}} &=& \Delta D^{-1 \ \ \dot{N}}_{\ \ \; \dot{M}} A^{\dot{M}}
\\
B' &=& B + 2 D_{\dot{N}}^{\ \ \dot{M}} \sigma^{\dot{N}} A_{\dot{M}}
\\
 \Delta B'_{\dot{N}\dot{X}} &=& D_{\dot{X}}^{\ \ \dot{B}} D_{\dot{N}}^{\ \ \dot{R}} B_{\dot{R}\dot{B}} - 2 \Delta A_{\dot{R}} \, \sigma_{(\dot{X}} D_{\dot{N})}^{\ \ \dot{R}}
\\ 
\label{transformacja_na_CA}
 C'^{\dot{B}} &=& D^{-1 \ \ \dot{B}}_{\ \ \; \dot{S}} C^{\dot{S}} - \frac{2}{3} \frac{\partial \ln \Delta}{\partial q'_{\dot{B}}} - 2 B' \sigma^{\dot{B}} - \frac{4}{3} \sigma_{\dot{N}} B'^{\dot{N}\dot{B}} + \frac{8}{3} A'_{\dot{N}} \sigma^{\dot{N}} \sigma^{\dot{B}}
\\
\label{transformacja_na_CABR}
 C'^{\dot{A}\dot{B}\dot{R}} &=& \Delta D^{-1 \ \ \dot{A}}_{\ \ \; \dot{X}} D^{-1 \ \ \dot{B}}_{\ \ \; \dot{S}} D^{-1 \ \ \dot{R}}_{\ \ \; \dot{M}} C^{\dot{X}\dot{S}\dot{M}} +  D^{-1 \ \ (\dot{A}}_{\ \ \; \dot{S}}  D^{-1 \ \ \dot{B}}_{\ \ \; \dot{M}} \frac{\partial^{2} q'^{\dot{R})}}{\partial q_{\dot{S}} \partial q_{\dot{M}}}
\\ \nonumber
&& -A'^{(\dot{A}} \sigma^{\dot{B}} \sigma^{\dot{R})} - B'^{(\dot{A}\dot{B}} \sigma^{\dot{R})}
\\
E'^{\dot{A}\dot{B}} &=& D^{-1 \ \ \dot{A}}_{\ \ \; \dot{R}} D^{-1 \ \ \dot{B}}_{\ \ \; \dot{S}} E^{\dot{R}\dot{S}}
+ D^{-1 \ \ (\dot{A}}_{\ \ \; \dot{R}} \frac{\partial \sigma^{\dot{B})}}{\partial q_{\dot{R}}}
-C'^{(\dot{A}}\sigma^{\dot{B})} -B' \sigma^{\dot{A}}\sigma^{\dot{B}}
\\ \nonumber
&&+ A'_{\dot{N}} \sigma^{\dot{N}} \sigma^{\dot{A}} \sigma^{\dot{B}} - C'^{\dot{N}\dot{A}\dot{B}} \sigma_{\dot{N}} - B'^{\dot{N}(\dot{A}} \sigma^{\dot{B})} \sigma_{\dot{N}}
\end{eqnarray}
\end{subequations}

\section{Spaces of the type $[\textrm{D}]^{nn} \otimes [-]^{e}$}
\setcounter{equation}{0}

\subsection{Two-sided conformally recurrent spaces of the type $[\textrm{D}]^{nn} \otimes [-]^{e}$}

Except obvious advantage (compare the solution (\ref{ogolne_rozwiazanie_na_Q})), the $\mathcal{HH}$-spaces formalism has one major disadvantage. Namely, it is quite hard to pass from the spaces of the types $[\textrm{deg}]^{n} \otimes [-]$ to the spaces of the type $[\textrm{D}]^{nn} \otimes [-]^{e}$. Indeed, the condition for $C_{ABCD}$ to be of the type $[\textrm{D}]$ reads
\begin{equation}
\label{warunek_na_typp_D}
2C^{(2)} C^{(2)} - 3 C^{(1)} C^{(3)} = 0
\end{equation}
Both $C^{(1)}$ and $C^{(2)}$ depends on $\eth^{\dot{A}} Q_{\dot{A}\dot{B}}$ (\ref{postac_eth_Q}) and its derivatives. Clearly, (\ref{warunek_na_typp_D}) becomes very complicated algebraic condition. 

[\textbf{Remark}: $\mathcal{HH}$-spaces formalism is not the only formalism which can be used to obtain metrics of the SD spaces of the type [D] which are two-sided conformally recurrent. The alternative approach uses the null tetrad which is adapted to the two distinct congruences of SD null strings. If both these congruences are nonexpanding then the space is of the type $[\textrm{D}]^{nn} \otimes [\textrm{any}]$ and the metric has the form
\begin{equation}
\frac{1}{2} ds^{2} = \frac{\partial^{2} \mathcal{F}}{\partial q^{A} \partial q^{\dot{B}}} \, dq^{A} dq^{\dot{B}}
\end{equation}
Coordinates $(q^{A},q^{\dot{A}})$ are so called \textsl{double null coordinates}. Although double null coordinates system seems to be perfectly adapted to the spaces equipped with two distinct nonexpanding congruences of SD null strings, it has one severe disadvantage. To pass to the SD spaces of the type $[\textrm{D}]^{nn} \otimes [-]$ one has to solve the condition $C_{\dot{A}\dot{B}\dot{C}\dot{D}}=0$. It is equivalent to the set of five nasty differential equations. This set of equations have been attacked in \cite{Przanowski_Formanski_Chudecki}, with no success.]

However, Eq. (\ref{warunek_na_typp_D}) leads to the type $[\textrm{D}]^{n} \otimes [-]^{e}$. There is no second congruence of SD null strings in general. To find the spaces which are two-sided conformally recurrent we need much more special type $[\textrm{D}]^{nn} \otimes [-]^{e}$. Instead of the condition (\ref{warunek_na_typp_D}) we focus on the equations for the second congruence of SD null strings. 

The first congruence of SD null strings is generated by the spinor $m_{A} = (0,m)$, $m\ne0$. The second congruence is distinct so it must be generated by the spinor $l_{A} = (a,b)$ such that $l^{A}m_{A} = -am \ne 0$. Hence, $a\ne0$. The null strings equations for this second congruence take the form 
\begin{equation}
l^{A} l^{B} \nabla_{A \dot{M}} l_{B} = 0
\end{equation}
and
\begin{eqnarray}
\nonumber
\textrm{for the type } [\textrm{D}]^{ne} \otimes [-]^{e}: && l^{B} \nabla_{A \dot{M}} l_{B} \ne 0   
\\ \nonumber
\textrm{for the type } [\textrm{D}]^{nn} \otimes [-]^{e}: && l^{B} \nabla_{A \dot{M}} l_{B} = 0
\end{eqnarray}
We have to solve the equations
\begin{equation}
\label{warunek_na_typp_D_nn}
l^{B} \nabla_{A \dot{M}} l_{B} =  l^{B} (\partial_{A \dot{M}} l_{B}  -  \mathbf{\Gamma}^{S}_{\ BA \dot{M}} l_{S} )   = 0
\end{equation}
where
\begin{equation}
 \partial_{A \dot{C}} := \sqrt{2} [\partial_{\dot{C}}, \eth_{\dot{C}}]
\end{equation}
Eqs. (\ref{warunek_na_typp_D_nn}) written for $A=1$ give
\begin{equation}
l^{B} \frac{\partial l_{B}}{\partial p^{\dot{M}}} = 0 \ \Longleftrightarrow \ b \frac{\partial a}{\partial p^{\dot{M}}} - a \frac{\partial b}{\partial p^{\dot{M}}} = 0
\end{equation}
with the solution $b = \tilde{f}(q^{\dot{M}}) a$. Transformation for the function $\tilde{f}$ reads 
\begin{equation}
\label{transformacja_na_funckje_f}
\tilde{f}'=\Delta^{-1}\tilde{f}-\tilde{h}
\end{equation}
Using gauge function $\tilde{h}$ one can put $\tilde{f}=0$ without any loss of generality. Hence, $l_{A}=(a,0)$, $a \ne 0$. Eqs. (\ref{warunek_na_typp_D_nn}) written for $A=2$ give
\begin{equation}
0= a^{2} \mathbf{\Gamma}_{222\dot{M}} \ \Longrightarrow \ \eth^{\dot{A}} Q_{\dot{A}\dot{B}}=0
\end{equation}
Condition $\eth^{\dot{A}} Q_{\dot{A}\dot{B}}=0$ implies $C^{(2)}=C^{(1)}=C_{22\dot{A}\dot{B}}=0$. $\eth^{\dot{A}} Q_{\dot{A}\dot{B}}=0$ becomes a set of two third-order polynomials in $p^{\dot{A}}$ with the coefficients which depend on $q^{\dot{A}}$ only. Explicitly, we arrive at the following system of 15 equations for 15 functions
\begin{subequations}
\label{koncowy_uklad}
\begin{eqnarray}
\label{koncowy_uklad_1}
&&  \frac{\partial A_{(\dot{A}}}{\partial q^{\dot{B})}} - C_{\dot{A}\dot{B}\dot{X}}A^{\dot{X}} - \frac{1}{2} C_{(\dot{A}} A_{\dot{B})} + \frac{1}{2} B B_{\dot{A}\dot{B}} = 0
\\ 
\label{koncowy_uklad_2}
&& \frac{\partial E_{\dot{A}\dot{B}}}{\partial q_{\dot{A}}} - E^{\dot{A}\dot{C}}C_{\dot{A}\dot{C}\dot{B}} - \frac{1}{2}C^{\dot{A}}E_{\dot{A}\dot{B}} =0
\\
\label{koncowy_uklad_3}
&&\frac{3}{2} \frac{\partial B}{\partial q^{\dot{A}}} - \frac{\partial B_{\dot{A}\dot{N}}}{\partial q_{\dot{N}}} + C_{\dot{A}\dot{X}\dot{B}}B^{\dot{X}\dot{B}} - 2A^{\dot{B}}E_{\dot{B}\dot{A}} - C^{\dot{X}}B_{\dot{X}\dot{A}} = 0
\\
\label{koncowy_uklad_4}
&&  \frac{1}{2} \frac{B_{(\dot{A}\dot{B}}}{\partial q^{\dot{C})}} - B^{\dot{X}}_{\ (\dot{A}} C_{\dot{B}\dot{C})\dot{X}} - A_{(\dot{A}} E_{\dot{B}\dot{C})} + \frac{1}{4} C_{(\dot{A}} B_{\dot{B}\dot{C})}=0
\\
\label{koncowy_uklad_5}
&&\frac{3}{2} \frac{\partial C^{\dot{A}}}{\partial q^{\dot{A}}} - 2B^{\dot{A}\dot{B}}E_{\dot{A}\dot{B}} = 0
\\ 
\label{koncowy_uklad_6}
&& \frac{\partial C_{\dot{A}\dot{B}\dot{C}}}{\partial q_{\dot{A}}} - \frac{1}{2} \frac{\partial C_{(\dot{B}}}{\partial q^{\dot{C})}} -C_{\dot{A}\dot{X}\dot{B}}C^{\dot{A}\dot{X}}_{\ \ \ \dot{C}} - \frac{1}{4} C_{\dot{B}} C_{\dot{C}} + BE_{\dot{B}\dot{C}} + \frac{1}{2} C^{\dot{A}} C_{\dot{A}\dot{B}\dot{C}}=0
\end{eqnarray}
\end{subequations}
Eqs. (\ref{koncowy_uklad}) look nasty, but, perhaps surprisingly, they can be completely solved in all generality. From (\ref{postac_R}) it follows, that $A_{\dot{N}}$ and $B$ cannot vanish simultaneously. Otherwise, the curvature scalar $R=0$ and the SD Weyl spinor is not of the type [D] anymore. Thus we have
\begin{Lemat}
\label{Lemat_o_przypadkach}
For the spaces of the type $[\textrm{D}]^{nn} \otimes [-]^{e}$ the following statements are equivalent
\begin{eqnarray}
\nonumber
(i) && A_{\dot{N}} = 0
\\ \nonumber
(ii) && C_{12\dot{A}\dot{B}} = 0 
\\ \nonumber
(iii) && R = \textrm{const}
\end{eqnarray}
\end{Lemat}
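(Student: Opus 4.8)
The plan is to prove all three equivalences through a short cycle of implications, $(iii)\Rightarrow(i)\Rightarrow(ii)\Rightarrow(iii)$, using only the explicit formulas (\ref{postac_R}) for $R$ and $C_{12\dot{A}\dot{B}}$, the reduced field equations (\ref{koncowy_uklad}) (which are available here because on the type $[\textrm{D}]^{nn}\otimes[-]^{e}$ one has $\eth^{\dot{A}}Q_{\dot{A}\dot{B}}=0$, so that all structure functions $A_{\dot{N}},B,B_{\dot{A}\dot{B}},\dots$ depend on $q^{\dot{A}}$ only), together with the remark preceding the Lemma that $A_{\dot{N}}$ and $B$ cannot vanish simultaneously.

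For $(iii)\Rightarrow(i)$: since $\tfrac{R}{6}=4A_{\dot{N}}p^{\dot{N}}-2B$ with $A_{\dot{N}},B$ independent of $p^{\dot{A}}$, constancy of $R$ forces $\partial R/\partial p^{\dot{N}}=0$, i.e. $A_{\dot{N}}=0$. For $(i)\Rightarrow(ii)$: with $A_{\dot{N}}=0$ the simultaneous-nonvanishing remark gives $B\neq0$, so equation (\ref{koncowy_uklad_1}) reduces to $\tfrac12 B\,B_{\dot{A}\dot{B}}=0$, hence $B_{\dot{A}\dot{B}}=0$; substituting $A_{\dot{N}}=0=B_{\dot{A}\dot{B}}$ into $C_{12\dot{A}\dot{B}}=2A_{(\dot{A}}p_{\dot{B})}+B_{\dot{A}\dot{B}}$ from (\ref{postac_R}) gives $C_{12\dot{A}\dot{B}}=0$.

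For $(ii)\Rightarrow(iii)$: $C_{12\dot{A}\dot{B}}=0$ means $2A_{(\dot{A}}p_{\dot{B})}+B_{\dot{A}\dot{B}}=0$ identically in $p^{\dot{A}}$; separating the parts homogeneous of degree one and degree zero in $p^{\dot{A}}$ yields $A_{(\dot{A}}p_{\dot{B})}=0$ for all $p^{\dot{A}}$, hence $A_{\dot{A}}=0$, as well as $B_{\dot{A}\dot{B}}=0$. Then equation (\ref{koncowy_uklad_3}) collapses to $\tfrac32\,\partial B/\partial q^{\dot{A}}=0$, so $B$ is a constant, and therefore $R=6\bigl(4A_{\dot{N}}p^{\dot{N}}-2B\bigr)=-12B$ is constant, closing the cycle.

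I do not expect a genuine obstacle: each implication is a one-line algebraic reduction. The only points deserving care are (a) being explicit that $C_{12\dot{A}\dot{B}}=0$ and $R=\textrm{const}$ are statements about fields on $\mathcal{M}$, so that the $p^{\dot{A}}$-dependence may be compared monomial by monomial, and (b) invoking the simultaneous-nonvanishing remark at exactly the place where one divides equation (\ref{koncowy_uklad_1}) by $B$; everything else is bookkeeping with the system (\ref{koncowy_uklad}).
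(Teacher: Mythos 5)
Your proposal is correct and uses exactly the same ingredients as the paper's own proof---the explicit formulas (\ref{postac_R}) for $R$ and $C_{12\dot{A}\dot{B}}$, equations (\ref{koncowy_uklad_1}) and (\ref{koncowy_uklad_3}), and the remark that $A_{\dot{N}}$ and $B$ cannot vanish simultaneously---differing only in that you organize the argument as the cycle $(iii)\Rightarrow(i)\Rightarrow(ii)\Rightarrow(iii)$ while the paper establishes $(i)\Leftrightarrow(ii)$ and $(i)\Leftrightarrow(iii)$ directly. This is essentially the paper's proof.
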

\begin{proof}
If $A_{\dot{N}}=0$ then from (\ref{koncowy_uklad_1}) it follows $B_{\dot{A}\dot{B}}=0$, because $A_{\dot{N}}$ and $B$ cannot vanish simultaneously. It implies $C_{12\dot{A}\dot{B}} = 0$. Conversely, $C_{12\dot{A}\dot{B}} = 0$ implies $A_{\dot{N}}=0$. Thus, equivalence $(i) \Longleftrightarrow (ii)$ is completed. It remains to prove $(i) \Longleftrightarrow (iii)$. If $R=\textrm{const}$ then $A_{\dot{N}}=0$ (compare (\ref{postac_R})). If $A_{\dot{N}}=0$ then (\ref{koncowy_uklad_1}) implies $B_{\dot{A}\dot{B}}=0$ and (\ref{koncowy_uklad_3}) gives $B=\textrm{const}$; consequently $R=\textrm{const}$. It completes the proof. 
\end{proof}

From Lemma \ref{Lemat_o_przypadkach} it follows that there are two cases to be considered. The first is $A_{\dot{N}} \ne 0$ and in this case traceless Ricci tensor is nonzero. The second is $A_{\dot{N}}=0$ and in this case the space is Einstein space.

In the end of this section we write down the form of the Sommers vectors of both congruences
\begin{eqnarray}
\label{postac_wektorow_Sommersa}
&& Z_{A\dot{B}} = \sqrt{2} \left[ \partial_{\dot{M}} \ln m, \eth_{\dot{M}} \ln m + \frac{1}{2} \partial^{\dot{A}} Q_{\dot{A}\dot{M}} \right] \ne 0
\\ \nonumber
&& S_{A\dot{B}} = \sqrt{2} \left[ \partial_{\dot{M}} \ln a, \eth_{\dot{M}} \ln a - \frac{1}{2} \partial^{\dot{A}} Q_{\dot{A}\dot{M}} \right] \ne 0
\end{eqnarray}
Both these vectors are necessarily nonzero. Otherwise, the space is equipped with covariantly constant field of 1-index undotted spinors. It is possible only for the type $[\textrm{N}]^{n}$. Using (\ref{postac_wektorow_Sommersa}) one finds (compare {\ref{wektor_r_przez_Sommersy}})
\begin{equation}
r_{A\dot{B}} = 2 Z_{A\dot{B}} + 2 S_{A\dot{B}} = 2 \sqrt{2} [\partial_{\dot{M}} \ln (ma), \eth_{\dot{M}} \ln (ma)]= \partial_{A\dot{M}} \ln (am)^{2} = \partial_{A\dot{M}} \ln R
\end{equation}
Clearly, the condition $\nabla_{m} C_{ABCD} = r_{m} C_{ABCD}$ implies $\nabla_{m} R = r_{m} R$. However, simple calculations prove that the condition $\nabla_{m} C_{AB\dot{C}\dot{D}} = r_{m} C_{AB\dot{C}\dot{D}}$ cannot be satisfied for nonzero $r_{m}$, i.e., for $R \ne \textrm{const}$. Hence, any SD space of the type $[\textrm{D}]^{nn} \otimes [-]^{e}$ is conformally recurrent, but
\begin{eqnarray}
\nonumber
(i) && \textrm{if } C_{ab} = 0 \textrm{ then it is conformally symmetric and symmetric}
\\ \nonumber
(ii) && \textrm{if } C_{ab} \ne 0 \textrm{ then it cannot be conformally symmetric, nor recurrent}
\end{eqnarray}

\subsection{Non-Einstein spaces of the type $[\textrm{D}]^{nn} \otimes [-]^{e}$}

The analysis presented in this section and in the section \ref{sekcja_przestrzenie_einsteinowskie} uses the gauge freedom (\ref{gauge}). Formulas (\ref{wzory_transformacyjne_ogolne}) remain valid but $\sigma^{\dot{A}}$ are not arbitrary anymore. The ambiguity in the gauge function $\tilde{h}$ has been already used (compare (\ref{transformacja_na_funckje_f})) and from now on $\tilde{h}=0$. Hence, $\sigma^{\dot{A}} = \dfrac{\partial \sigma}{\partial q'_{\dot{A}}}$, where $\sigma=\sigma (q^{\dot{M}})$.

Because $A_{\dot{N}} \ne 0$ one can put $A_{\dot{N}} = \delta_{\dot{N}}^{\dot{1}}$ and $B=0$ without any loss of generality. Gauge transformations are reduced to
\begin{eqnarray}
\nonumber
&& q'^{\dot{1}} = q'^{\dot{1}} (q^{\dot{1}}), \ q'^{\dot{2}} = q^{\dot{2}} + \tilde{s} (q^{\dot{1}}) \ \Longrightarrow \ \Delta = \dfrac{d q'^{\dot{1}}}{d q^{\dot{1}}}
\\ \nonumber
&& \sigma=\sigma (q^{\dot{1}}) \ \Longrightarrow \ \sigma^{\dot{1}}=0 ,\ \sigma^{\dot{2}} = -\Delta^{-1} \dfrac{d \sigma}{d q^{\dot{1}}}
\end{eqnarray}
where $\tilde{s}$ and $\sigma$ are arbitrary functions of $q^{\dot{1}}$ only. If, for convenience, we denote 
\begin{equation}
\label{skroty_funkcji}
M := C_{\dot{1}\dot{1}\dot{1}}  ,\ N := C_{\dot{1}\dot{1}\dot{2}}  ,\  P := C_{\dot{1}\dot{2}\dot{2}}  ,\ S := C_{\dot{2}\dot{2}\dot{2}} 
\end{equation}
from Eqs. (\ref{koncowy_uklad_1}) one finds
\begin{equation}
 S=0, \ C_{\dot{1}} = -2N, \ C_{\dot{2}} = -4P
\end{equation}
Eqs. (\ref{koncowy_uklad_3}), (\ref{koncowy_uklad_4}) and consistency conditions between them implies
\begin{eqnarray}
\label{pierwsze_wnioski}
&& B_{\dot{A}\dot{C}} = B_{\dot{A}\dot{C}} (q^{\dot{1}}), E_{\dot{1}\dot{1}} = \frac{1}{2} \frac{d B_{\dot{1}\dot{1}}}{d q^{\dot{1}}} + MB_{\dot{1}\dot{2}} - \frac{3}{2} N B_{\dot{1}\dot{1}}
\\ \nonumber
&& E_{\dot{1}\dot{2}} = \frac{1}{2} \frac{d B_{\dot{1}\dot{2}}}{d q^{\dot{1}}} + \frac{1}{2} MB_{\dot{2}\dot{2}} - \frac{3}{2} P B_{\dot{1}\dot{1}}, \ E_{\dot{2}\dot{2}} = \frac{1}{2} \frac{d B_{\dot{2}\dot{2}}}{d q^{\dot{1}}} + \frac{3}{2} NB_{\dot{2}\dot{2}} - 3P B_{\dot{1}\dot{2}}
\end{eqnarray}
Feeding (\ref{koncowy_uklad_6}) with (\ref{pierwsze_wnioski}) yields
\begin{equation}
M=M(q^{\dot{1}}), \ P=P(q^{\dot{1}}), \ N=N(q^{\dot{1}})
\end{equation}
Eqs. (\ref{koncowy_uklad_2}) give
\begin{equation}
\label{rownanie_ostateczne_1}
E_{\dot{2}\dot{2}} = \textrm{const}_{1}, \ 
\frac{d^{2} B_{\dot{1}\dot{2}}}{d q^{\dot{1}} d q^{\dot{1}}} + 2M \frac{d B_{\dot{2}\dot{2}}}{d q^{\dot{1}}} -  3N \frac{d B_{\dot{1}\dot{2}}}{d q^{\dot{1}}} - 3 B_{\dot{1}\dot{1}} \frac{d P}{d q^{\dot{1}}} + B_{\dot{2}\dot{2}} \frac{d M}{d q^{\dot{1}}} =0
\end{equation}
the last Eq. (\ref{koncowy_uklad_5}) reduces to the form
\begin{equation}
\label{rownanie_ostateczne_2}
6P + B^{2}_{\dot{1}\dot{2}} -B_{\dot{1}\dot{1}} B_{\dot{2}\dot{2}} = \textrm{const}_{2}
\end{equation}

Finally, we have to solve three equations (\ref{rownanie_ostateczne_1}) and (\ref{rownanie_ostateczne_2}) on six functions $B_{\dot{A}\dot{C}}, M,N,P$ of one variable $q^{\dot{1}}$. There are still three arbitrary gauge functions of one variable to our disposal. Detailed analysis of the transformation formula for $B_{\dot{A}\dot{B}}$ leads to three cases
\begin{eqnarray}
\nonumber
(i) &&  B_{\dot{A}\dot{C}} B^{\dot{A}\dot{C}} = 0, \ B_{\dot{A}\dot{C}} \ne 0
\\ \nonumber
(ii) &&  B_{\dot{A}\dot{C}} B^{\dot{A}\dot{C}} \ne 0
\\ \nonumber
(iii) && B_{\dot{A}\dot{C}}  = 0 
\end{eqnarray}
Case $(i)$ gives the solution
\begin{eqnarray}
\nonumber
 && A_{\dot{1}} = 1, \ A_{\dot{2}} = 0, \ B=0, \ B_{\dot{1}\dot{1}} = B_{\dot{1}\dot{2}} = 0, \ B_{\dot{2}\dot{2}} = 1, 
\\ \nonumber
&& C_{\dot{1}\dot{1}\dot{1}}=M_{0}, \ C_{\dot{1}\dot{1}\dot{2}}=N_{0}, \ C_{\dot{1}\dot{2}\dot{2}}=P_{0}, \ C_{\dot{2}\dot{2}\dot{2}}=0, 
\\ \nonumber
&& E_{\dot{1}\dot{1}}=0,\ E_{\dot{1}\dot{2}} = \frac{1}{2} M_{0}, \ E_{\dot{2}\dot{2}} = \frac{3}{2} N_{0}, \ C_{\dot{1}} = -2N_{0} , \ C_{\dot{2}} = -4P_{0},
\\ \nonumber
\end{eqnarray}
What implies
\begin{eqnarray}
Q^{\dot{1}\dot{1}} &=& -{p^{\dot{1}}}^{3} + p^{\dot{1}} p^{\dot{2}} + 3P_{0}p^{\dot{1}} + \frac{3}{2} N_{0}
\\ \nonumber
Q^{\dot{1}\dot{2}} &=& -{p^{\dot{1}}}^{2}p^{\dot{2}} + \frac{1}{2}{p^{\dot{2}}}^{2} + 3P_{0}p^{\dot{2}} - \frac{1}{2} M_{0}
\\ \nonumber
Q^{\dot{2}\dot{2}} &=& -p^{\dot{1}}{p^{\dot{2}}}^{2} - M_{0}p^{\dot{1}} -3 N_{0} p^{\dot{2}}
\end{eqnarray}
where $M_{0}$, $N_{0}$ and $P_{0}$ are arbitrary constants. The metric reads
\begin{eqnarray}
\label{postac_dKS}
\frac{1}{2} ds^{2} &=& dydq - dxdp - (xy^{2} +M_{0} x +3N_{0} y) \, dq^{2} 
\\ \nonumber
&& + \left( 2x^{2}y - y^{2} -6P_{0}y +  M_{0}  \right) \, dq dp
+  \left( -x^{3} + x y + 3P_{0}x + \frac{3}{2} N_{0}  \right) \, dp^{2}
\end{eqnarray}
where we denoted
\begin{equation}
\label{oznaczenia_wspolrzednych}
p^{\dot{1}} =:x, \ p^{\dot{2}} =: y, \ q^{\dot{1}} =:q, \ q^{\dot{2}} =: p
\end{equation}
In this case the condition (\ref{warunek_na_pojedynczego_Kerra}) cannot be satisfied so the metric belongs to the dKS class.

Cases $(ii)$ and $(iii)$ lead to the solution
\begin{eqnarray}
\nonumber
 && A_{\dot{1}} = 1, \ A_{\dot{2}} = 0, \ B=0, \ B_{\dot{1}\dot{1}} = B_{\dot{2}\dot{2}} = 0, \ B_{\dot{1}\dot{2}} = B_{0}, 
\\ \nonumber
&& C_{\dot{1}\dot{1}\dot{1}}= C_{\dot{1}\dot{1}\dot{2}}=C_{\dot{2}\dot{2}\dot{2}}=0, \ C_{\dot{1}\dot{2}\dot{2}}=P_{0},  
\\ \nonumber
&& E_{\dot{1}\dot{1}}= E_{\dot{1}\dot{2}} = 0, \ E_{\dot{2}\dot{2}} = -3P_{0} B_{0}, \ C_{\dot{1}} = 0 , \ C_{\dot{2}} = -4P_{0}
\end{eqnarray}
Hence
\begin{eqnarray}
Q^{\dot{1}\dot{1}} &=& -{p^{\dot{1}}}^{3} + B_{0} {p^{\dot{1}}}^{2} + 3P_{0}p^{\dot{1}} -3 P_{0} B_{0}
\\ \nonumber
Q^{\dot{1}\dot{2}} &=& -{p^{\dot{1}}}^{2}p^{\dot{2}} +  3P_{0}p^{\dot{2}} 
\\ \nonumber
Q^{\dot{2}\dot{2}} &=& -p^{\dot{1}}{p^{\dot{2}}}^{2} - B_{0} {p^{\dot{2}}}^{2}
\end{eqnarray}
where $P_{0}$ and $B_{0}$ are arbitrary constants. Finally, we arrive at the metric
\begin{eqnarray}
\label{Non_Einstein_second_solution}
\frac{1}{2} ds^{2} &=& dydq - dxdp - y^{2} (x + B_{0} ) \, dq^{2} 
\\ \nonumber
&& + 2y \left(x^{2} -  3P_{0}   \right) \, dq dp
+  (x-B_{0})(3P_{0}-x^{2}) \, dp^{2}
\end{eqnarray}
In general the metric (\ref{Non_Einstein_second_solution}) belongs to the dKS class but if $3P_{0} = B_{0}^{2}$ then it becomes sKS metric.

[\textbf{Remark}. In both cases (\ref{postac_dKS}) and (\ref{Non_Einstein_second_solution}) the traceless Ricci tensor can be written in the form 
\begin{equation}
\nonumber
C_{AB \dot{C} \dot{D}} = f_{AB} f_{\dot{C}\dot{D}} \ \ \textrm{where} \ \ f_{AB} f^{AB} \ne 0, \ f_{\dot{C}\dot{D}} f^{\dot{C}\dot{D}} \ne 0
\end{equation}
so it has the form of a \textsl{general electromagnetic field} \cite{Przanowski_Plebanski_materia}. Hence, the algebraic type of the traceless Ricci tensor is $[2N_{1}-2N]_{2}$.]

\subsection{Einstein spaces of the type $[\textrm{D}]^{nn} \otimes [-]^{e}$}
\label{sekcja_przestrzenie_einsteinowskie}

In this section we deal with Einstein case which is characterized by $A_{\dot{N}}=0$. Since $C_{ab}=0$ and $R=-4\Lambda$ where $\Lambda$ is cosmological constant, one finds (compare (\ref{postac_R}))
\begin{equation}
\label{wzor_na_B0}
B=B_{0} = \frac{\Lambda}{3} = \textrm{const} \ne 0
\end{equation}
Eqs. (\ref{koncowy_uklad_1}) imply $B_{\dot{A}\dot{C}}=0$. Eqs. (\ref{koncowy_uklad_3}) and (\ref{koncowy_uklad_4}) are identically satisfied. The existence of $C=C(q^{\dot{M}})$ such that $C^{\dot{A}} = \dfrac{\partial C}{\partial q_{\dot{A}}}$ follows from Eq. (\ref{koncowy_uklad_5}). Then $C$ can be gauged away (compare (\ref{transformacja_na_CA})). Hence, $C^{\dot{A}}=0$ but from now on the gauge function $\sigma$ is not arbitrary anymore. Indeed, it reads $ \Lambda \sigma + \ln \Delta = \textrm{const}$. We are left with five equations
\begin{subequations}
\label{koncowy_uklad_einstein}
\begin{eqnarray}
\label{koncowy_uklad_2_einstein}
&& \frac{\partial E_{\dot{A}\dot{B}}}{\partial q_{\dot{A}}} - E^{\dot{A}\dot{C}}C_{\dot{A}\dot{C}\dot{B}} =0
\\
\label{koncowy_uklad_6_einstein}
&& \frac{\partial C_{\dot{A}\dot{B}\dot{C}}}{\partial q_{\dot{A}}}  -C_{\dot{A}\dot{X}\dot{B}}C^{\dot{A}\dot{X}}_{\ \ \ \dot{C}}  + B_{0} E_{\dot{B}\dot{C}} =0
\end{eqnarray}
\end{subequations}
Because $B_{0} \ne 0$, from Eqs. (\ref{koncowy_uklad_6_einstein}) one gets solutions for $E_{\dot{A}\dot{B}}$
\begin{eqnarray}
B_{0} E_{\dot{1}\dot{1}} &=& 2MP - 2N^{2} - \frac{\partial M}{\partial q^{\dot{2}}} + \frac{\partial N}{\partial q^{\dot{1}}}
\\  \nonumber
B_{0} E_{\dot{1}\dot{2}} &=& MS - NP - \frac{\partial N}{\partial q^{\dot{2}}} + \frac{\partial P}{\partial q^{\dot{1}}}
\\  \nonumber
B_{0} E_{\dot{2}\dot{2}} &=& 2NS - 2P^{2} - \frac{\partial P}{\partial q^{\dot{2}}} + \frac{\partial S}{\partial q^{\dot{1}}}
\end{eqnarray}
(where we used the abbreviations (\ref{skroty_funkcji})). Eqs. (\ref{koncowy_uklad_2_einstein}) written explicitly read
\begin{subequations}
\label{rownania_ostateczne_einsteinowskie}
\begin{eqnarray}
&& \frac{\partial}{\partial q^{\dot{2}}} \left( 3MP - 3N^{2} -\frac{\partial M}{\partial q^{\dot{2}}} + \frac{\partial N}{\partial q^{\dot{1}}} \right) - \frac{\partial}{\partial q^{\dot{1}}} \left( MS - \frac{\partial N}{\partial q^{\dot{2}}} + \frac{\partial P}{\partial q^{\dot{1}}} \right)  
\\ \nonumber
&&  \ \ \ \ \ \ \ \ \ \ \ \ \ \ \ \ \ \ \ 
+ 3N \frac{\partial P}{\partial q^{\dot{1}}} - M \frac{\partial S}{\partial q^{\dot{1}}} = 0
\\ 
&& \frac{\partial}{\partial q^{\dot{2}}} \left( MS  -\frac{\partial N}{\partial q^{\dot{2}}} + \frac{\partial P}{\partial q^{\dot{1}}} \right) - \frac{\partial}{\partial q^{\dot{1}}} \left( 3NS-3P^{2} - \frac{\partial P}{\partial q^{\dot{2}}} + \frac{\partial S}{\partial q^{\dot{1}}} \right)  
\\ \nonumber
&& \ \ \ \ \ \ \ \ \ \ \ \ \ \ \ \ \ \ \ 
-3P \frac{\partial N}{\partial q^{\dot{2}}} +S\frac{\partial M}{\partial q^{\dot{2}}} = 0
\end{eqnarray}
\end{subequations}
At this point the crucial step is careful analysis of the transformation formulas (\ref{transformacja_na_CABR}). One finds
\begin{eqnarray}
\Delta^{2} M' &=& \left( \frac{\partial q'^{\dot{2}}}{\partial q^{\dot{2}}} \right)^{3} \left(  M - 3 \tau N + 3 \tau ^{2} P - \tau^{3} S + \frac{\partial \tau}{\partial q^{\dot{1}}} - \tau \frac{\partial \tau}{\partial q^{\dot{2}}} \right) 
\\ \nonumber
&& \tau := \frac{\partial q'^{\dot{2}}}{\partial q^{\dot{1}}} \left( \frac{\partial q'^{\dot{2}}}{\partial q^{\dot{2}}} \right)^{-1}
\end{eqnarray}
Using the ambiguity in the gauge function $q'^{\dot{2}} = q'^{\dot{2}} (q^{\dot{M}})$ (consequently, in $\tau$) the function $M$ can be gauged away. To keep $M=0$ gauge freedom is limited to the transformations $q'^{\dot{2}} = q'^{\dot{2}} (q^{\dot{2}})$, i.e., $\tau=0$\footnote{To be more precise, the choice $M=0$ can be maintained with nonzero $\tau$. However, we put $\tau=0$ for clarity of further investigations.}. Transformation for $N$ reads now
\begin{equation}
\Delta^{2} N' = \left( \frac{dq'^{\dot{2}}}{d q^{\dot{2}}} \right)^{2} \left( \frac{\partial q'^{\dot{1}}}{\partial q^{\dot{1}}} N - \frac{1}{3} \frac{\partial^{2} q'^{\dot{1}}}{\partial q^{\dot{1}} \partial q^{\dot{1}}} \right)
\end{equation}
Arbitrariness in $q'^{\dot{1}} = q'^{\dot{1}} (q^{\dot{B}})$ allows to gauge away the function $N$. Gauge transformations are reduced then to the following formulas
\begin{equation}
q'^{\dot{2}} = q'^{\dot{2}} (q^{\dot{2}}) , \ \frac{d q'^{\dot{2}}}{d q^{\dot{2}}} =: \mu(q^{\dot{2}}) , \ 
q'^{\dot{1}} = \nu (q^{\dot{2}}) q^{\dot{1}} + g (q^{\dot{2}}) , \ \Delta = \mu \nu \ne 0
\end{equation}
where $g$, $\mu$ and $\nu$ are arbitrary gauge functions of one variable $q^{\dot{2}}$.

Eqs. (\ref{rownania_ostateczne_einsteinowskie}) with $M=N=0$ implies
\begin{equation}
P = f q^{\dot{1}} + \beta, \ S = f^{2} {q^{\dot{1}}}^{3} + \left( \frac{d f}{d q^{\dot{2}}} + 3f \beta \right) {q^{\dot{1}}}^{2} + \gamma q^{\dot{1}} + \delta
\end{equation}
where $f, \beta, \gamma$ and $\delta$ are arbitrary functions of $q^{\dot{2}}$. Transformation formulas for these functions read
\begin{eqnarray}
\label{wzory_transformacyjne_na_cztery}
\mu \nu f' &=& f
\\ \nonumber
\mu \beta' &=& \beta - \mu g f' + \frac{1}{3} \frac{d}{d q^{\dot{2}}} \ln (\mu \nu^{-2})
\\ \nonumber
\mu^{2} \nu^{3} \gamma' &=& \nu^{3} \gamma - 3 \nu g^{2} f^{2} - 2\nu^{3}\mu^{2}g \left( \frac{1}{\mu} \frac{d f'}{d q^{\dot{2}}} + 3 f'\beta' \right) -3 \nu^{2} \frac{d \nu}{d q^{\dot{2}}} \beta - 3 f \nu^{2} \frac{dg}{d q^{\dot{2}}}
\\ \nonumber
&& - \nu^{2} \frac{d^{2} \nu}{dq^{\dot{2}}dq^{\dot{2}}} + 2\nu \left( \frac{d \nu}{d q^{\dot{2}}} \right)^{2}
\\ \nonumber
\mu^{2}\nu^{2} \delta' &=& \nu^{3} \delta -3\nu^{2} \beta \frac{dg}{d q^{\dot{2}}} - \nu^{2} \frac{d^{2} g}{dq^{\dot{2}}dq^{\dot{2}}} + 2\nu \frac{dg}{d q^{\dot{2}}} \frac{d \nu}{d q^{\dot{2}}} - \mu^{2}\nu^{2} ( f'^{2}g^{3} + g \gamma')
\\ \nonumber
&& -\mu^{2}\nu^{2}g^{2} \left( \frac{1}{\mu} \frac{d f'}{d q^{\dot{2}}} + 3 f' \beta' \right)
\end{eqnarray}
From (\ref{wzory_transformacyjne_na_cztery}) it follows that one can always put $\beta = \gamma = \delta = 0$ without any loss of generality. Finally we obtain the solution
\begin{eqnarray}
\nonumber
&& A_{\dot{B}} = 0, \ B=B_{0} \ne 0, \ B_{\dot{A}\dot{C}} = 0, \ C_{\dot{A}}=0
\\ \nonumber
&& C_{\dot{1}\dot{1}\dot{1}}=0, \ C_{\dot{1}\dot{1}\dot{2}}=0, \ C_{\dot{1}\dot{2}\dot{2}}=f q^{\dot{1}}, \ C_{\dot{2}\dot{2}\dot{2}}=f^{2} {q^{\dot{1}}}^{3} + \frac{d f}{d q^{\dot{2}}} {q^{\dot{1}}}^{2}
\\ \nonumber
&& E_{\dot{1}\dot{1}}=0,\ E_{\dot{1}\dot{2}} = \frac{f}{B_{0}} , \ E_{\dot{2}\dot{2}} =  \frac{1}{B_{0}} \left( f^{2} {q^{\dot{1}}}^{2} + \frac{d f}{d q^{\dot{2}}} {q^{\dot{1}}} \right)
\\ \nonumber
\end{eqnarray}
Hence
\begin{eqnarray}
\label{Q_dla_drugiej_mozliwosci}
Q^{\dot{1}\dot{1}} &=& B_{0} p^{\dot{1}}p^{\dot{1}} - f q^{\dot{1}} p^{\dot{1}} - \left( f^{2} {q^{\dot{1}}}^{3} + \frac{d f}{d q^{\dot{2}}} {q^{\dot{1}}}^{2} \right)  p^{\dot{2}} + \frac{1}{B_{0}} \left( f^{2} {q^{\dot{1}}}^{2} + \frac{d f}{d q^{\dot{2}}} q^{\dot{1}} \right)
\\ \nonumber
Q^{\dot{1}\dot{2}} &=& B_{0} p^{\dot{1}}p^{\dot{2}}  + f q^{\dot{1}} p^{\dot{2}} - \frac{f}{B_{0}}
\\ \nonumber
Q^{\dot{2}\dot{2}} &=& B_{0} p^{\dot{2}}p^{\dot{2}}
\end{eqnarray}
where $f=f(q^{\dot{2}})$ is an arbitrary function. Using (\ref{oznaczenia_wspolrzednych}) and (\ref{wzor_na_B0}) one arrives at the metric
\begin{eqnarray}
\label{rozwiazanie_Dxnic_einsteinowskie}
\frac{1}{2} ds^{2} &=& dy dq - dx dp + \frac{\Lambda}{3} (y dq-x dp)^{2} - 2 f \left( qy- \frac{3}{\Lambda} \right) dp dq
\\ \nonumber
&& - \left( f qx + \left( qy- \frac{3}{\Lambda} \right) \left( f^{2} q^{2} + \frac{d f}{d p} q \right)   \right)  dp^{2}, \ \ f=f(p)
\end{eqnarray}
\textbf{Remark}: The problem of Einstein spaces of the type $[\textrm{D}]^{nn} \otimes [-]^{e}$ can be attacked more directly, by using the formalism of the hyperheavenly spaces instead of the weak hyperheavenly spaces. Frankly, we followed that way and we arrived exactly at the solution (\ref{rozwiazanie_Dxnic_einsteinowskie}). The \textsl{key function}, which generates the metric (\ref{rozwiazanie_Dxnic_einsteinowskie}) has the form
\begin{equation}
\label{funkcja_kluczowa_generujaca}
\Theta = \left( \frac{1}{2} f q y^{2} - \frac{3f}{\Lambda} y  \right) x + \frac{1}{6} \left( f^{2} q^{3} + \frac{df}{d p} q^{2} \right) y^{3}- \frac{3}{2 \Lambda} \left( f^{2} q^{2} + \frac{df}{d p} q \right) y^{2} 
\end{equation}

The metric (\ref{rozwiazanie_Dxnic_einsteinowskie}) contains one arbitrary function of one variable $f(p)$. This function does not enter into the curvature or higher curvature invariants what suggests that it is gauge-dependent. We were not able to find an explicit transformation of the variables which makes $f=0$. However, there are a little more subtle ways to prove that $f$ is not essential. We are indebted to the anonymous Referee for pointing out these ways for us.

First we consider the Killing vectors of the metric (\ref{rozwiazanie_Dxnic_einsteinowskie}). After some tedious calculations we arrive at the conclusion, that any Killing vector admitted by the metric (\ref{rozwiazanie_Dxnic_einsteinowskie}) can be brought to the form 
\begin{eqnarray}
K &=& (-faq^{3} + a_{p} q^{2} +c q +e) \, \frac{\partial}{\partial q} + (aq+b) \, \frac{\partial}{\partial p}
\\ \nonumber
&& + \bigg[ \left( - \partial_{p} (af) q^{3} + a_{pp} q^{2} + c_{p} q + e_{p} \right) y - (a_{p}q+b_{p}) x 
\\ \nonumber
&& \ \ \ - \frac{1}{\Lambda} \left( -3 \partial_{p} (af) q^{2} + 3 a_{pp} q +c_{p} + b_{pp} \right)   \bigg] \frac{\partial}{\partial x}
\\ \nonumber
&& + \left[  (3faq^{2} - 2 a_{p} q -c)y +ax + \frac{3}{\Lambda} (a_{p}-2faq   )  \right]  \frac{\partial}{\partial y}
\end{eqnarray}
where
\begin{eqnarray}
&& a (p) := eF F_{p} - \alpha_{0} F^{2} - F_{p} (m_{0} p^{2} + r_{0} p +s_{0}) + F (m_{0} p + r_{0} - n_{0}) + z_{0} p + j_{0}  \ \ \ \ \ \ \ \ 
\\ \nonumber
&& b (p) := -eF + m_{0} p^{2} + r_{0} p + s_{0}
\\ \nonumber
&& c (p) := \alpha_{0} F - 2eF_{p} + m_{0} p + n_{0}
\\ \nonumber
&& e (p) := \alpha_{0} p + \beta_{0}
\end{eqnarray}
$\alpha_{0}$, $\beta_{0}$, $m_{0}$, $n_{0}$, $r_{0}$, $s_{0}$, $z_{0}$ and $j_{0}$ are constants and $F(p)$ is a function such that $f=F_{pp}$. Hence, the metric (\ref{rozwiazanie_Dxnic_einsteinowskie}) admits 8 Killing vectors. There is only one SD Einstein metric with 8-dimensional symmetry algebra (see, e.g., \cite{Dunajski_Mettler}). It reads
\begin{equation}
\label{rozwiazanie_Dxnic_einsteinowskie_bez_funkcji_f}
\frac{1}{2} ds^{2} = dy dq - dx dp + \frac{\Lambda}{3} (y dq-x dp)^{2} 
\end{equation}
which is exactly the metric (\ref{rozwiazanie_Dxnic_einsteinowskie}) with $f=0$. It proves, that $f$ in (\ref{rozwiazanie_Dxnic_einsteinowskie}) can be put zero without any loss of generality. For completeness we list all Killing vectors admitted by the metric (\ref{rozwiazanie_Dxnic_einsteinowskie_bez_funkcji_f}). They read
\begin{eqnarray}
&& K_{1} = \frac{\partial}{\partial p}, \ K_{2} = \frac{\partial}{\partial q} , \ K_{3} = q \frac{\partial}{\partial q}  - y \frac{\partial}{\partial y}, \  K_{4} = p \frac{\partial}{\partial p} - x \frac{\partial}{\partial x}
\\ \nonumber
&& K_{5} = q \frac{\partial}{\partial p} + x \frac{\partial}{\partial y} , \ K_{6} = p \frac{\partial}{\partial q} + y \frac{\partial}{\partial x} 
\\ \nonumber
&& K_{7} =   q \left( q \frac{\partial}{\partial q}+p  \frac{\partial}{\partial p} - x \frac{\partial}{\partial x}  \right) + \left( \frac{3}{\Lambda} - 2qy +px \right) \frac{\partial}{\partial y} 
\\ \nonumber
&& K_{8} =  p \left( q \frac{\partial}{\partial q} + p \frac{\partial}{\partial p} - y \frac{\partial}{\partial y} \right) + \left( qy-2px - \frac{3}{\Lambda}   \right) \frac{\partial}{\partial x}
\end{eqnarray}
Note, that with $f=0$ the key function (\ref{funkcja_kluczowa_generujaca}) vanishes. It means that in the second Plebański formalism the general metric of the type $[\textrm{D}]^{nn} \otimes [-]^{e}$ is generated by the key function $\Theta=0$.

If coordinates $(q,p,x,y)$ are real and $f$ is a real smooth function then the metric (\ref{rozwiazanie_Dxnic_einsteinowskie}) becomes real metric with neutral signature and with nonzero curvature scalar. An interesting way of construction of such a metric has been presented in \cite{Dunajski_Mettler}. The authors of \cite{Dunajski_Mettler} constructed a neutral signature Einstein metric with $R \ne 0$ from a given projective structure on a 2-dimensional surface. Such a metric has a form
\begin{equation}
\label{metryka_Dunajski}
\frac{1}{2} ds^{2} = d \xi_{i} dx^{i} - \left( \Gamma^{k}_{\  ij} \xi_{k} - \frac{\Lambda}{3} \, \xi_{i} \xi_{j} -\frac{3}{\Lambda} P_{ij}  \right)   dx^{i} dx^{j}
\end{equation}
where $i,j,k=1,2$, $(\xi_{i}, x^{j})$ are local coordinates, $\Gamma^{k}_{\  ij}$ are the Christoffel symbols of a $[\nabla]$-representative connection and $P_{ij}$ is the Schouten tensor of $\Gamma^{k}_{\  ij}$. The metric (\ref{rozwiazanie_Dxnic_einsteinowskie}) fits in (\ref{metryka_Dunajski}) if we put
\begin{equation}
\nonumber
\xi_{1} = y, \ \xi_{2} = -x, \ x^{1} = q, \ x^{2} = p 
\end{equation}
Then
\begin{eqnarray}
&& \Gamma^{k}_{\  11} = \Gamma^{2}_{\  12}=0, \ \Gamma^{1}_{\  12} = f q, \ \Gamma^{2}_{\  22} = -f q, \ \Gamma^{1}_{\  22} =    f^{2} q^{3} + \frac{df}{d p} q^{2}  \ \ \ \ 
\\ \nonumber
&& P_{11} = 0, \ P_{12} = P_{21} = f, \ P_{22} = f^{2} q^{2} + \frac{df}{d p} q
\end{eqnarray}
The projective curvature of the projective structure in 2-dimensions is given by $\nabla_{[i}P_{j]k}$ which is zero for any $f$. Therefore the projective structure is projectively flat. Hence, the corresponding SD Einstein metric is isometric to (\ref{rozwiazanie_Dxnic_einsteinowskie_bez_funkcji_f}) what again prove, that $f$ in (\ref{rozwiazanie_Dxnic_einsteinowskie}) can be eliminated.

\subsection{Einstein spaces of the type $[\textrm{D}]^{nn} \otimes [-]^{e}$ in double null coordinates system}

\renewcommand{\arraystretch}{2.1}
\setlength\arraycolsep{2.5pt}

In this section we transform the solution (\ref{rozwiazanie_Dxnic_einsteinowskie_bez_funkcji_f}) into double null coordinates system. To do it we proceed, as follows.

The first congruence of SD null strings is given by the Pfaff system $e^3=0$, $e^1=0$ or, equivalently, $q^{\dot{A}} = \textrm{const}$. The second congruence is defined by the Pfaff system $e^4=0$, $e^2=0$ (because $\eth^{\dot{A}} Q_{\dot{A}\dot{B}}=0$). It means that there exist functions $\mathcal{F}_{A \dot{B}}$ and $q^{A}$ such that 
\begin{equation}
-dp^{\dot{A}} +Q^{\dot{A}\dot{B}} dq_{\dot{B}} = \mathcal{F}^{B \dot{A}} dq_{B}
\end{equation}
Hence, the second congruence of SD null strings is given by the equations $q^{A} = \textrm{const}$. Then
\begin{subequations}
\begin{eqnarray}
\label{metryka_standard_hiper}
\frac{1}{2} ds^{2} &=& -dp^{\dot{A}}  dq_{\dot{A}} + Q^{\dot{A}\dot{B}} \, dq_{\dot{A}}  dq_{\dot{B}} 
\\
\label{metryka_double_null}
&=&\mathcal{F}_{A \dot{B}} \, d q^{A} d q^{\dot{B}} 
\end{eqnarray}
\end{subequations}
To pass to the double null coordinates system the coordinates $q^{\dot{B}}$ should be kept unchanged and coordinates $p^{\dot{A}}$ should be treated as functions of $(q^{B}, q^{\dot{C}})$. It is equivalent to the set of equations 
\begin{equation}
\label{rownania_przejscia}
Q^{\dot{A}\dot{B}} = \frac{\partial p^{(\dot{A}}}{\partial q_{\dot{B})}}
\end{equation}
Then
\begin{equation}
\label{rownanie_na_potencjal}
\mathcal{F}_{A \dot{B}}  = \frac{\partial p_{\dot{B}}}{\partial q^{A}}
\end{equation}
with
\begin{eqnarray}
\nonumber
\det \mathcal{F}_{A\dot{B}} &=& 
   \left| \begin{array}{cc}
   \mathcal{F}_{1\dot{1}} & \mathcal{F}_{1\dot{2}}    \\  
   \mathcal{F}_{2\dot{1}} & \mathcal{F}_{2\dot{2}}   \\ 
    \end{array} \right|  =
\left| \begin{array}{cc}
   \dfrac{\partial y}{\partial r} & -\dfrac{\partial x}{\partial r}    \\  
   \dfrac{\partial y}{\partial s} & -\dfrac{\partial x}{\partial s}   \\ 
    \end{array} \right|   \ne 0
\end{eqnarray}
where we denoted
\begin{equation}
p^{\dot{1}} =:x, \ p^{\dot{2}} =: y, \ q^{\dot{1}} =:q, \ q^{\dot{2}} =: p, \ q^{1} =:r, \ q^{2} := s
\end{equation}
Hence, (\ref{metryka_double_null}) takes the form
 \begin{equation}
 \label{metryka_doubbllee_nulll_druga}
\frac{1}{2} ds^{2} =  \frac{\partial y}{\partial r} \, dr dq - \frac{\partial x}{\partial r} \, dr dp
+ \frac{\partial y}{\partial s} \, ds dq - \frac{\partial x}{\partial s} \, ds dp
\end{equation}
Eqs. (\ref{rownania_przejscia}) read
\begin{equation}
\label{rownanie_przejscia}
 \frac{\partial y}{\partial q} = -B_{0} y^{2} , \ 
 \frac{\partial y}{\partial p} - \frac{\partial x}{\partial q} = 2B_{0} xy , \ 
 \frac{\partial x}{\partial p} = B_{0} x^{2} ; \ \ 3 B_{0} = \Lambda
\end{equation}
with solution
\begin{equation}
x= - \frac{J}{B_{0} (B_{0}q+J p +G)} , \ y=  \frac{1}{ B_{0}q+J p +G}
\end{equation}
where $J$ and $G$ are arbitrary functions of $(r,s)$. Performing straightforward calculations one finds that the factors $G_{r}dr + G_{s} ds$ and $J_{r}dr + J_{s}ds$ appear in (\ref{metryka_doubbllee_nulll_druga}). These factors are obviously equal to $dG$ and $dJ$, respectively. It suggest that the functions $G$ and $J$ should be considered as the new coordinates. Changing the names of these coordinates, namely $G \rightarrow  s$, $J \rightarrow r$ we arrive at the metric 
\begin{equation}
\label{typ_DnnxDnn_einstein_IIformalizm}
\frac{1}{2}  ds^{2} =  \frac{3}{\Lambda(rp+s+q)^{2}} \big(  (q+s) \, drdp - p \, dr dq - r \, ds dp - ds dq  \big)
\end{equation}
The metric (\ref{typ_DnnxDnn_einstein_IIformalizm}) can be brought to more plausible form by the transformation
\begin{equation}
r=\frac{1}{w}, \ p = \frac{1}{\widetilde{z}}, \ s = \frac{z}{w}, \ q = \frac{\widetilde{w}}{\widetilde{z}}
\end{equation}
Hence
\begin{equation}
\label{typ_DnnxDnn_einstein_IIformalizm_optymalna}
\frac{1}{2} ds^{2} =  \frac{3}{\Lambda(1+w \widetilde{w} + z \widetilde{z})^{2}}  \big( (1+ w \widetilde{w}) \, dz d \widetilde{z} + (1+  z \widetilde{z}) \, dw d \widetilde{w} - w \widetilde{z} \, dz d \widetilde{w} - z \widetilde{w} \, dw d \widetilde{z} \big)
\end{equation}
The metric (\ref{typ_DnnxDnn_einstein_IIformalizm_optymalna}) can be rewritten as
\begin{equation}
\label{typ_DnnxDnn_einstein_IIformalizm_optymalna_2}
\frac{1}{2} ds^{2} = \mathcal{F}_{w\widetilde{w}} \, dw d\widetilde{w} + \mathcal{F}_{w\widetilde{z}} \, dw d\widetilde{z} + \mathcal{F}_{z\widetilde{w}} \, dz d\widetilde{w} + \mathcal{F}_{z\widetilde{z}} \, dz d\widetilde{z}
\end{equation}
with the potential
\begin{equation}
\mathcal{F} = \frac{3}{\Lambda} \ln (1 + w \widetilde{w} + z \widetilde{z})
\end{equation}
Finally we arrive at the conclusion that the metric of the SD Einstein space of the type $[\textrm{D}]^{nn} \otimes [-]^{e}$ can be always brought to the form (\ref{typ_DnnxDnn_einstein_IIformalizm_optymalna}). In general this metric is complex and coordinates $(w, \widetilde{w}, z ,\widetilde{z})$ are also complex. From the complex metric (\ref{typ_DnnxDnn_einstein_IIformalizm_optymalna}) three different real slices can be obtained. Indeed:
\begin{enumerate}
\item For the neutral slice of the type $[\textrm{D}_{r}]^{nn} \otimes [-]^{e}$ coordinates $(w, \widetilde{w}, z ,\widetilde{z})$ are real. Such a space is equipped with two (real) nonexpanding congruences of SD null strings. 
\item For the neutral slice of the type $[\textrm{D}_{c}] \otimes [-]^{e}$ coordinates $(w, \widetilde{w}, z, \widetilde{z})$ are complex and such that $\widetilde{z} = \bar{w}$, $\widetilde{w} = \bar{z}$ where bar stands for the complex conjugation. In this case there are no real congruences of SD null strings.
\item For the Riemannian slice of the type $[\textrm{D}] \otimes [-]$ coordinates $(w, \widetilde{w}, z ,\widetilde{z})$ are complex and such that $\widetilde{w}=\bar{w}$ and $\widetilde{z} = \bar{z}$. In this case the metric (\ref{typ_DnnxDnn_einstein_IIformalizm_optymalna}) is Fubini-Study metric.
\end{enumerate}

The procedure presented in this section can be applied to the non-Einstein metrics (\ref{postac_dKS}) and (\ref{Non_Einstein_second_solution}), but the results will be presented elsewhere.


\section{Concluding remarks.}

In this paper we have investigated SD spaces which are two-sided conformally recurrent. Although many papers have been devoted to the conformally recurrent spaces, there is only one paper which consider SD spaces with such a property \cite{Plebanski_Przanowski_rec}. In that paper Plebański and Przanowski were able to integrate type-[N] SD spaces. The question of the type-[D] SD spaces remained open. In the present paper we have filled this gap. 

We proved that two-sided conformally recurrent SD spaces of the type [D] belong to the special class $[\textrm{D}]^{nn} \otimes [-]^{e}$, i.e., they are equipped with two nonexpanding congruences of SD null strings. Conversely, any space of the type $[\textrm{D}]^{nn} \otimes [-]^{e}$ is a two-sided conformally recurrent space. This property is crucial and allows to integrate the equations. There are two classes of non-Einstein spaces (\ref{postac_dKS}) and (\ref{Non_Einstein_second_solution}) and one class of Einstein spaces (\ref{rozwiazanie_Dxnic_einsteinowskie_bez_funkcji_f}). The real neutral and Riemannian slices of the Einstein solution have been also found.

The further investigations should be focused on the two issues related to the non-Einstein solutions. The algebraic reason why the non-Einstein solutions split in two classes (\ref{postac_dKS}) and (\ref{Non_Einstein_second_solution}) is obvious. The geometrical reason is unknown. More detailed investigations of the properties of the traceless Ricci tensor could answer this question. We believe, that the authors of \cite{Plebanski_Przanowski_rec} faced the same problem. They arrived at the three classes of solutions of the type-[N] SD spaces. Also in this case the geometrical difference between these classes is unclear. The question how obtain real Riemannian slices of the non-Einstein complex metrics (\ref{postac_dKS}) and (\ref{Non_Einstein_second_solution}) is also opened. The work on it is underway.
\newline
\newline
\textbf{Acknowledgments.} We are indebted to the Referee of this paper for detailed review and for proving that the function $f$ in (\ref{rozwiazanie_Dxnic_einsteinowskie}) is gauge-dependent. We are also grateful to Maciej Przanowski for many discussions and his interest in our work.


\end{document}